\newcommand{\QED}{\hspace*{\fill}\rule{2.5mm}{2.5mm}}
\newtheorem{theorem}{Theorem}[section]
\newtheorem{definition}{Definition}[section]
\newenvironment{proof}{\noindent{\bf Proof\ }}{\QED\\}
\newcommand{\R}{\mathbb{R}}
\newcommand{\N}{\mathbb{N}}
\newtheorem{lemma}{Lemma}[section]
\begin{document}
\begin{center}
\vspace{0.5cm} {\large \bf ``Approximating quantiles in very large datasets"}\\
\vspace{1cm} Reza Hosseini, Simon Fraser University\\
Statistics and actuarial sciences, 8888 University Road,\\
Burnaby, BC, Canada, V65 1S6\\
 reza1317@gmail.com
\end{center}

\begin{abstract}
Very large datasets are often encountered in climatology, either
from a multiplicity of observations over time and space or outputs
from deterministic models (sometimes in petabytes= 1 million
gigabytes). Loading a large data vector and sorting it, is
impossible sometimes due to memory limitations or computing power.
We show that a proposed algorithm to approximating the median, ``the
median of the median'' performs poorly. Instead we develop an
algorithm to approximate quantiles of very large datasets which
works by partitioning the data or use existing partitions (possibly
of non-equal size). We show the deterministic precision of this
algorithm and how it can be adjusted to get customized precisions.
\end{abstract}

\noindent Keywords: quantiles; large datasets; approximation;
sorting; algorithm.

\section{Introduction}

This paper develops an algorithm for approximating the quantiles in
petascale (petabyte= one million gigabytes) datasets and uses the
``probability loss function'' to assess the quality of the
approximation. The need for such an approximation does not arise for
the sample average, another common data summary. That is because if
we break down the data to equal partitions and calculate the mean
for every partition, the mean of the obtained means is equal to the
total mean. It is also easy to recover the total mean from the means
of unequal partitions if their length is known.

However computer memories, several gigabytes (GBs) in size, cannot
handle large datasets that can be petabytes (PBs) in size. For
example, a laptop with 2 GBs of memory, using the well--known R
package, could find the median of a data file of about 150 megabytes
(MBs) in size. However, it crashed for files larger than this. Since
large datasets are commonly assembled in blocks, say by day or by
district, that need not be a serious limitation except insofar as
the quantiles computed in that way cannot be used to find the
overall quantile. Nor would it help to sub-sample these blocks,
unless these (possibly dependent) sub-samples could be combined into
a grand sub-sample whose quantile could be computed. That will not
usually be possible in practice. The algorithm proposed here is a
``worst-case'' algorithm in the sense that no matter how the data
are arranged, we will reach the desired precision. This is of course
not true if we sample from the data because there is a (perhaps
small) probability that the approximation could be poor.

We also address the following question:
\begin{quote}
{\it {\bf Question}: If we partition the data--file into a number of
sub-files and compute the medians of these, is the median of the
medians a good approximation to the median of the data--file?}
\end{quote}

We first show that the median of the medians does not approximate
the exact median well in general, even after imposing conditions on
the number of partitions or their length. However for our proposed
algorithm, we show how the partitioning idea can be employed
differently to get good approximations. ``Coarsening'' is introduced
to summarize data vector with the purpose of inferring about the
quantiles of the original vector using the summaries. Then the
``d-coarsening'' quantile algorithm which works by partitioning the
data (or use previously defined partitions) to possibly non-equal
partitions, summarizing them using coarsening and inferring about
the quantiles of the original data vector using the summaries. Then
we show the deterministic accuracy of the algorithm in Theorem
\ref{theorem-coarsening}. The accuracy is measured in terms of the
probability loss function of the original data vector. This is an
extension of the work in \cite{alsabti-1997} to non-equal size
partition case. Theorem \ref{theorem-coarsening} still requires the
partition sizes to be divisible by $d$ the coarsening factor. In
order to extend the results further to the case where the partitions
are not divisible by $d$, we investigate how quantiles of a data
vector with missing data or contaminated data relate to the
quantiles of the original data in Lemma
\ref{lemma-missed-data-quantile} and Lemma
\ref{lemma-polluted-data-quantile}. Also in Lemma
\ref{lemma-quantile-coarse}, we show if the quantiles of a coarsened
vector are used in place of the quantiles of the original data
vector how much accuracy will be lost. Finally we investigate the
performance of the algorithm using both simulations and real climate
datasets.

We define the loss of estimating/approximating a quantile $q$ by
$\hat{q}$ to be the probability that the random variable falls in
between the two values. A limited version of this concept only for
data vectors can be found in computer science literature, where
$\epsilon$-approximations are used to approximate quantiles of large
datasets. (See for example \cite{Manku98approximatemedians}.)
However, this concept has not been introduced as a measure of loss
and the definition is limited to data vectors rather than arbitrary
distributions.

The traditional definition of quantiles for a random variable $X$
with distribution function $F$,
\[lq_X(p)=\inf \{x|F(x) \geq p\},\]
appears in classic works as \cite{parzen-1979}. We call this the
``left quantile function''. In some books (e.g. \cite{rychlik}) the
quantile is defined as
\[rq_X(p)=\sup \{x|F(x) \leq p\},\]
this is what we call the ``right quantile function''. Also in
robustness literature people talk about the upper and lower medians
which are a very specific case of these definitions. \cite{reza-phd}
considers both definitions, explore their relation and show that
considering both has several advantages.

\begin{lemma} (Quantile Properties Lemma) Suppose $X$ is a random
variable on the probability space $(\Omega,\Sigma,P)$ with
distribution function $F$:

\begin{enumerate}[a)]
\item $F(lq_F(p))\geq p$. \item $lq_F(p) \leq rq_F(p)$.
\item $p_1<p_2 \Rightarrow rq_F(p_1)\leq lq_F(p_2)$.
\item $rq_F(p)=\sup\{x|F(x)\leq p\}$.
\item $P(lq_F(p)<X<rq_F(p))=0$. i.e. $F$ is flat in the interval $(lq_F(p),rq_F(p))$. \item $P(X<
rq_F(p)) \leq p$.
\item If $lq_F(p)<rq_F(p)$ then $F(lq_F(p))=p$ and hence $P(X\geq rq_F(p))=1-p$. \item $lq_F(1)>-\infty,rq_F(0)<\infty$
and $P(rq_F(0) \leq X \leq lq_F(1))=1$.
\item $lq_F(p)$ and $rq_F(p)$ are non-decreasing functions of $p$.
\item If $P(X=x)>0$ then $lq_F(F(x))=x.$ \item $x<lq_F(p) \Rightarrow F(x)<p$ and
$x>rq_F(p) \Rightarrow F(x)>p.$
\end{enumerate}

\label{quantile-properties}
\end{lemma}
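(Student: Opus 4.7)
My plan is to proceed through the eleven items by exploiting three recurring tools: the right-continuity and monotonicity of $F$, the defining sup/inf characterizations of $lq_F$ and $rq_F$, and the countable additivity of $P$. Several items are essentially direct unpacking of the definitions, so I would dispose of them first and reserve effort for the handful of items that actually require a delicate limit argument.

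First I would group the trivial items. Part (a) follows because the set $A_p=\{x:F(x)\geq p\}$ is closed from the right by right-continuity of $F$, so its infimum lies in $A_p$. Part (i) is immediate from set inclusions $A_{p_1}\supseteq A_{p_2}$ when $p_1\le p_2$ (and the analogous inclusion for $rq_F$). Part (d) looks like a tautology relative to the stated definition, so I would read it as a reassurance that the two natural definitions coincide and treat it by direct inspection. Part (k) is the contrapositive of monotonicity plus (a): if $F(x)\ge p$ then $x\in A_p$ and so $x\ge lq_F(p)$. Part (b) I would obtain by contradiction: if $rq_F(p)<lq_F(p)$, then for any $z$ strictly between them, monotonicity forces $F(z)>p$ (from $z>rq_F(p)$ via (k)) and simultaneously $F(z)<p$ (from $z<lq_F(p)$ via (k)), a contradiction. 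Part (c) is similar, picking any $z\in(rq_F(p_1),lq_F(p_2))$, or is vacuous if the interval is empty.

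Next I would handle the flatness cluster (e)--(g), which is the conceptual heart of the lemma. For (e), by right-continuity, for every $x\in(lq_F(p),rq_F(p))$ we have $F(x)<p$ on one side (from $x<rq_F(p)$ and the sup definition combined with monotonicity) and $F(x)\ge p$ once we approach $rq_F(p)$ from the left, but using (k) carefully, $F$ is forced to be constant on the open interval. The cleanest route is: for any $x$ in the open interval, $x>lq_F(p)$ forces $F(x)\ge p$ (via membership in $A_p$ by definition of infimum), and $x<rq_F(p)$ forces $F(x)\le p$ (via the sup definition), so $F(x)=p$ on the entire open interval; then $P(lq_F(p)<X<rq_F(p))=F(rq_F(p)^-)-F(lq_F(p))=0$. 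Part (f) follows from left-continuity of $x\mapsto P(X<x)$ and the sup definition. Part (g) then combines (a), (e), and (f): the flatness in (e) plus the two one-sided bounds force $F(lq_F(p))=p$, and then $P(X\ge rq_F(p))=1-F(rq_F(p)^-)=1-p$.

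Finally, I would treat the boundary case (h) by recalling that $\lim_{x\to\infty}F(x)=1$ and $\lim_{x\to-\infty}F(x)=0$, so the sets defining $lq_F(1)$ and $rq_F(0)$ are nonempty and bounded from the appropriate side; the probability statement follows from (a), (k) and continuity of measure. For the atom case (j), if $P(X=x)>0$ then $F(x^-)<F(x)$, so $x\in A_{F(x)}$ while no point strictly less than $x$ lies in $A_{F(x)}$, giving $lq_F(F(x))=x$. The main obstacle I expect is pedantic rather than mathematical: keeping track of which arguments need right-continuity of $F$ versus left-continuity of $x\mapsto P(X<x)$, and making sure the sup/inf endpoint cases in (e)--(g) are handled without circularity, since (f) and (g) use (e) and all three feed into (h).
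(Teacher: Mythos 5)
The paper states this lemma without proof (it defers to the cited thesis for the study of the two quantile functions), so there is no in-paper argument to compare against; judged on its own, your outline is correct and is the standard one. The two workhorse observations you isolate --- $x>lq_F(p)\Rightarrow F(x)\ge p$ and $x<rq_F(p)\Rightarrow F(x)\le p$ (from the inf/sup definitions plus monotonicity of $F$), together with right-continuity to get attainment in (a) --- do deliver all eleven parts in the order you describe, and your treatment of the flatness cluster (e)--(g) (constancy $F\equiv p$ on the open interval $(lq_F(p),rq_F(p))$, then $F(rq_F(p)^-)=p$, and $F(lq_F(p))=p$ by right-continuity when the interval is nonempty) is exactly right, as are (j) and (k). One slip to fix: in (c) the contradiction interval should be $(lq_F(p_2),rq_F(p_1))$ --- you assume $rq_F(p_1)>lq_F(p_2)$, pick $z$ strictly between $lq_F(p_2)$ and $rq_F(p_1)$, and obtain $p_2\le F(z)\le p_1<p_2$; the interval $(rq_F(p_1),lq_F(p_2))$ you name is the one that exists when the conclusion already holds, so nothing can be extracted from it. You should also make explicit the conventions at the endpoints $p=0,1$ (e.g.\ $lq_F(0)=-\infty$, and $\{x\mid F(x)\ge 1\}$ may be empty for distributions with unbounded support), since (a) and (h) silently rely on them; for the intended application to empirical distributions of finite data vectors these sets are always nonempty and the issue disappears.
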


\section{Previous work}

Finding quantiles and using them to summarize data is of great
importance in many fields. One example is the climate studies where
we have very large datasets. For example the datasets created by
computer climate models are larger than PBs in size. In NCAR
(National Center for Atmospheric sciences at Boulder, Colorado), the
climate data (outputs of compute models) are saved on several disks.
To access different parts of these data a robot needs to change
disks form a very large storage space. Another case where we
confront  large datasets is in dealing with data streams which arise
in many different applications such as finance and high--speed
networking. For many applications, approximate answers suffice. In
computer science, quantiles are important to both data base
implementers and data base users. They can also be used by business
intelligence applications to drive summary information from huge
datasets.

As pointed out  in \cite{Manku98approximatemedians}, a good quantile
approximation algorithm should
\begin{enumerate}
\item not require prior knowledge of the arrival or value distribution of its
inputs.
\item provide explicit and tunable approximation guarantees.
\item compute results in a single pass.
\item produce multiple quantiles at no extra cost.
\item use as little memory as possible.
\item be simple to code and understand.
\end{enumerate}
Finding quantiles of data vectors and sorting them are parallel
problems since once we sort a vector finding any given quantile can
be done instantly. A good account of early work in sorting
algorithms can be found in \cite{knuth-1973}. Also
\cite{munro-paterson-1980} showed for $P$-pass algorithms
(algorithms that scan the data $P$ times) $\Theta(N/P)$ storage
locations are necessary and sufficient, where $N$ is the length of
the dataset. (See Appendix C for the definitions of complexity
functions such as $\Theta$.) It is well-known that the worst-case
complexity of sorting is $n\log_2 n +O(1)$ as shown in
\cite{Manku99randomsampling}. In \cite{Paterson97progressin},
Paterson discusses the progress made in the so-called ``selection''
problem. He lets $V_k(n)$ be the worst-case minimum number of
pairwise comparisons required to find the $k$-th largest out of $n$
``distinct elements''. In particular $M(n)=V_k(n)$ for $k=\lceil n/2
\rceil$. In \cite{blum-1973}, it is shown that the lower bound for
$V_k(n)$ is $n+\min\{k-1,n-k\}-1$, an achieved upper bound by Blum
is $5.43n$. Better upper bounds have been achieved through the
years. The best upper bound so far is $2.9423N$ and the lower bound
is $(2+\alpha)N$ where $\alpha$ is of order $2^{-40}$.

 \cite{yao-1974} shows that finding approximate median needs
$\Omega(N)$ comparisons in deterministic algorithms. Using sampling
this can be reduced to $O(\frac{1}{\epsilon^2}\log(\delta^{-1}))$
independent of $N$, where $\epsilon$ is the accuracy of the
approximation in terms of the ``probability loss'' in our notation.
 \cite{munro-paterson-1980} show that $O(N^{1/p})$ is necessary
and sufficient to find an exact $\phi$-quantile in $p$ passes.

Often an exact quantile is not needed. A related problem is finding
space-efficient one-pass algorithms to find approximate quantiles. A
summary of the work done in this subject and a new method is given
in \cite{Agrawal95aone-pass}. Two approximate quantile algorithms
using only a constant amount of memory were given in
\cite{jain-1985} and \cite{Agrawal95aone-pass}. No guarantee for the
error was given.  \cite{alsabti-1997} provide an algorithm and
guaranteed error in one pass. This algorithm works by partitioning
the data into subsets, summarizing each partition and then finding
the final quantiles using the summarized partitions. The algorithm
in this chapter is an extension of this algorithm to the case of
partitions of unequal length.

\section{The median of the medians}

A proposed algorithm to approximate the median of a very large data
vector partitions the data into subsets of equal length, computes
the median for each partition and then computes the median of the
medians. For example, suppose $n=lm$ and break the data to $m$
vectors of size $l$. One might conjecture that by picking
 $l$ or $m$ sufficiently large the median of the medians would ensure close proximity to the exact median.
 We show by an example that taking $l$ and $m$ very
large will not help to get close to the exact median. Let $l=2b+1$
and $m=2a+1$.

\begin{example}
{\tiny
\begin{table}[H]
  \centering  \footnotesize
  \begin{tabular}{lccc}
\toprule[1pt]
   \mbox{partition number}& Partition & \mbox{Median of the partition}\\
  \midrule[1pt]
1 & $(1,2,\cdots,b,b+1,10^b,\cdots,10^b)$ & $b+1$ \\
2 & $(1,2,\cdots,b,b+1,10^b,\cdots,10^b)$ & $b+1$ \\
. & . & .\\
.& . & .\\
.& . & .\\
a & $(1,2,\cdots,b,b+1,10^b,\cdots,10^b)$ & $b+1$ \\
a+1 & $(1,2,\cdots,b,b+1,10^b,\cdots,10^b)$ & $10^b$\\
a+2 & $(10^b,10^b,\cdots,10^b)$  & $10^b$ \\
.& . & .\\
.& . & .\\
.& . & .\\
2a+1 & $(10^b,10^b,\cdots,10^b)$ & $10^b$ \\
 \bottomrule[1pt]
  \end{tabular}
  \caption{The table of data}
\label{table:first-number-median}
 \end{table}}
\end{example}

Table \ref{table:first-number-median} shows the dataset partitioned
into $m=2a+1$ vectors of equal length. Every vector is of length
$l=2b+1$. The first $a+1$ vectors are identical and $10^b$ is
repeated $b$ times in them. The last $a$ vectors are also identical
with all components equal to $10^b$. The median of the medians turns
out to be $b+1$. However, the median of the dataset is $10^b$.  We
show that $b+1$ is in fact ``almost" the first quantile. This is
because $(b+1)$ is smaller than all $10^b$'s. There are
$(a+1)b+a(2b+1)$ data points equal to $10^b$. Hence $b+1$ is smaller
than this fraction of the data points:

\[\frac{(a+1)b+a(2b+1)}{(2a+1)(2b+1)}=\frac{2a+2}{2a+1}\frac{b}{4b+2}+\frac{a}{2a+1}\approx 1\times \frac{1}{4}+\frac{1}{2} \approx \frac{3}{4}.\]
With a similar argument, we can show that $b+1$ is greater than
almost a quarter of the data points (the ones equal to
$1,2,\cdots,b$). Hence $b+1$ is ``almost" the first quantile.

One can prove a rigorous version of the the following statement.\\

{\it The median of the medians is ``almost" between the first and
the
third quartile.}\\

We only give a heuristic argument for simplicity. To that end, let
$n=lm$ and $m=2a+1$ and $l=2b+1$. Let $M$ be the exact median and
$M'$ be the median of the medians. Order the obtained medians of
each partition and denote them by $M_1,\cdots,M_m$. By definition
$M'\geq M_j,\; j \leq a$ and $M'\leq M_j,\; j \geq a+1.$ Each
$M_j,\;\;j \leq a$ is less than or equal to $b$ data points in its
partition. Hence, we conclude that $M'$ is less than or equal to
$ab$ data points. Similarly $M'$ is greater than or equal to $ab$
data points (which are disjoint for the data points used before).
But $\frac{ab}{n}=\frac{ab}{(2a+1)(2b+1)}\approx \frac{1}{4}$.
Hence, $M'$ is greater than or equal to 1/4 data points and less
than or equal to 1/4 data points.

\section{Preliminary results}

Suppose $y' \in \{y_1,\cdots,y_n\}$, for future reference, we define
some additional notations for data vectors.

\begin{definition} The minimal index of $y'$, $m(y')$ and the
maximal  index of $y'$, $M(y')$ are defined as below:
\[m(y')=\min\{i|y_i=y'\},\;M(y')=\max\{i|y_i=y'\}.\]
\end{definition}

It is easy to see that in $y=sort(x)=(y_1,\cdots,y_n)$ all the
coordinates between $m(y')$ and $M(y')$ are equal to $y'$. Also note
that if $y'=z_i$ then $M(y')-m(y')+1=m_i$ is the multiplicity of
$z_i$. We use the notation $m_x$ and $M_x$ whenever we want to
emphasize that they depend on the data vector $x$.

\begin{lemma} Suppose $x=(x_1,\cdots,x_n)$, $y=sort(x)$ and $z$ a
non--decreasing vector of all distinct elements of $x$. Then\\
a) $m(z_{i+1})=M(z_{i})+1,\;\;i=0,\cdots,r-1$.\\
b) Suppose $\phi$ is a bijective increasing transformation over
$\R$, \[m_\phi(x)(\phi(z_i))=m_{x}(z_i),\] and
\[M_{\phi(x)}(\phi(z_i))=M_{x}(z_i),\] for $\;i=1,\cdots,r.$
\label{m-M-properties}
\end{lemma}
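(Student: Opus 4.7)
The plan is to exploit the basic structure of a sorted vector, namely that coordinates sharing a common value are consecutive. For (a) the core observation is that $\{j:y_j=z_i\}$ is an interval in $\{1,\ldots,n\}$, and the value immediately to its right must be $z_{i+1}$. For (b) the identity $\mathrm{sort}(\phi(x))=(\phi(y_1),\ldots,\phi(y_n))$ for a strictly increasing $\phi$ carries the whole argument.

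For (a), I first record the contiguity claim: if $y=\mathrm{sort}(x)$ then
$\{j:y_j=z_i\}=\{m(z_i),m(z_i)+1,\ldots,M(z_i)\}$, because for any $j$ in this range $z_i=y_{m(z_i)}\le y_j\le y_{M(z_i)}=z_i$. Next, fixing $1\le i\le r-1$, the index $M(z_i)+1$ satisfies $M(z_i)+1\le n$ (since $z_{i+1}$ appears somewhere in $y$), and by maximality of $M(z_i)$ the entry $y_{M(z_i)+1}$ is strictly greater than $z_i$. Because $z_1<\cdots<z_r$ enumerates every distinct value of $x$ in increasing order, the smallest value in this list that exceeds $z_i$ is $z_{i+1}$; hence $y_{M(z_i)+1}=z_{i+1}$ and so $m(z_{i+1})\le M(z_i)+1$. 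The reverse inequality $m(z_{i+1})\ge M(z_i)+1$ is immediate from contiguity applied to $z_i$: every position $\le M(z_i)$ carries a value $\le z_i<z_{i+1}$. The case $i=0$ is the boundary statement $m(z_1)=1$, which holds under the natural convention $M(z_0)=0$ since $z_1=y_1$.

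For (b), the key step is that $\phi$ being strictly increasing gives
\[
\mathrm{sort}(\phi(x))=\bigl(\phi(y_1),\ldots,\phi(y_n)\bigr),
\]
and $\phi$ being a bijection gives that the distinct components of $\phi(x)$ are exactly $\phi(z_1),\ldots,\phi(z_r)$, listed in the same (non-decreasing) order. For any $j$ one then has $\phi(y_j)=\phi(z_i)$ iff $y_j=z_i$, so $\{j:\mathrm{sort}(\phi(x))_j=\phi(z_i)\}=\{j:y_j=z_i\}$. Taking the minimum and maximum of this common set yields both $m_{\phi(x)}(\phi(z_i))=m_x(z_i)$ and $M_{\phi(x)}(\phi(z_i))=M_x(z_i)$.

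The only step that requires any care, rather than being completely mechanical, is the boundary case $i=0$ in (a), which is handled by adopting the convention $M(z_0)=0$; everything else is routine index bookkeeping combined with the order-preserving nature of $\phi$.
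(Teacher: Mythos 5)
Your proposal is correct and follows essentially the same route as the paper: part (a) is exactly the ``straightforward'' contiguity argument the paper omits, and part (b) rests on the same key identity $m_x(y')=\min\{i\,|\,y_i=y'\}=\min\{i\,|\,\phi(y_i)=\phi(y')\}=m_{\phi(x)}(\phi(y'))$ (with the analogous statement for $M$) that the paper uses. Your explicit handling of the $i=0$ boundary via the convention $M(z_0)=0$ is a reasonable reading of an edge case the paper leaves implicit.
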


\begin{proof}
a) is straightforward.\\
\noindent b) Note that
\[m_x(y')=\min\{i|y_i=y'\}=\min\{i|\phi(y_i)=\phi(y')\}=m_{\phi(x)}(\phi(y')).\]
A similar argument works for $M_x$.
\end{proof}

We also define the position and standardized position of an element
of a data vector.

\begin{definition}
Let $x=(x_1,\cdots,x_n)$ be a vector and
$y=sort(x)=(y_1,\cdots,y\_n)$. Then for $y' \in \{y_1,\cdots,y_n\}$,
we define
\[pos_x(y')=\{m_x(y'),m_x(y')+1,\cdots,M_x(y')\},\]
 where $pos$ stands for position. Then we define the standardized position of $y'$ to be

 \[spos_x(y')=(\frac{m_x(y')-1}{n},\frac{M_x(y')}{n}).\]
 \end{definition}
 In the following lemma we show that for every $p \in spos(y')$ (and only $p \in spos(y')$), we have
 $rq(p)=lq(p)=y'$. For example if $1/2 \in spos(y')$ then $y'$ is the
(left and right) median.

\begin{lemma} Suppose $x=(x_1,\cdots,x_n)$,
$y=sort(x)=(y_1,\cdots,y_n)$ and $y' \in \{y_1,\cdots,y_n\}$. Then

 \[p \in spos_x(y') \Leftrightarrow
lq_x(p)=rq_x(p)=y'.\]
\end{lemma}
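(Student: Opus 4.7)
The plan is to exploit the step-function structure of the empirical cumulative distribution function $F$ of $x$. Because $y=\mathrm{sort}(x)$, $F$ is constant on each interval $[y_i,y_{i+1})$, with the conventions $y_0=-\infty$ and $y_{n+1}=+\infty$. Writing $m=m_x(y')$ and $M=M_x(y')$, the key observation, already essentially in Lemma \ref{m-M-properties}(a), is that $F$ equals $(m-1)/n$ on $[y_{m-1},y')$ and equals $M/n$ on $[y',y_{M+1})$; in other words $F$ jumps at $y'$ from the left endpoint of $spos_x(y')$ to its right endpoint. Both directions of the biconditional will follow by reading off $lq$ and $rq$ from this picture.

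For the forward direction, assume $p\in spos_x(y')=((m-1)/n,M/n)$. For every $u<y'$ we have $F(u)\le (m-1)/n<p$, while $F(y')=M/n>p$ and $F$ is non-decreasing, so $\{u:F(u)\ge p\}=[y',\infty)$ and $\{u:F(u)\le p\}=(-\infty,y')$. Taking infimum and supremum respectively gives $lq_x(p)=rq_x(p)=y'$.

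For the converse, assume $lq_x(p)=rq_x(p)=y'$. From $\inf\{u:F(u)\ge p\}=y'$ together with the right-continuity of the empirical $F$ at $y'$, I read off $F(y')\ge p$, i.e.\ $p\le M/n$, and also $F(u)<p$ for all $u<y'$, i.e.\ $(m-1)/n<p$. It remains to exclude the endpoint case $p=M/n$; but if $p=M/n$ then $F$ takes the value $p$ on the whole interval $[y',y_{M+1})$, so $\sup\{u:F(u)\le p\}\ge y_{M+1}>y'$ (or $=+\infty$ when $M=n$), contradicting $rq_x(p)=y'$. Hence $p<M/n$, and combining the two strict inequalities gives $p\in spos_x(y')$.

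I do not expect any significant obstacle: the lemma is essentially bookkeeping about a step function. The one place where one must be a little careful is making sure that the non-strict inequalities coming from ``$y'\in\{F\ge p\}$'' and ``$y'$ bounds $\{F\le p\}$ from above'' get upgraded to the strict inequalities defining the open interval $spos_x(y')$; this is precisely where the hypothesis $rq_x(p)=y'$ (and not merely $\ge y'$) is used, to rule out $p=M/n$. Stating the conventions $y_0=-\infty$, $y_{n+1}=+\infty$ at the start absorbs the edge cases $m=1$ and $M=n$ without extra case analysis.
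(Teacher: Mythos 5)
Your proof is correct and follows essentially the same route as the paper: both arguments come down to locating the block of $y'$ in the sorted vector, i.e.\ identifying that the empirical distribution function sits at $(m_x(y')-1)/n$ just below $y'$ and jumps to $M_x(y')/n$ at $y'$, and then reading off $lq$ and $rq$. Your write-up is in fact more complete than the paper's, which computes $m(y')$ and $M(y')$ from the multiplicities and then concludes ``by definition''; in particular your careful handling of the converse direction and the exclusion of the endpoint $p=M_x(y')/n$ via the $rq$ hypothesis supplies exactly the detail the paper omits.
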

\begin{proof}
Let $z=(z_1,\cdots,z_r)$ be the reduced vector with multiplicities
$m_1,\cdots,m_r.$ Then $y'=m_i$ for some $i=1,\cdots,r.$
\begin{enumerate}

\item[case I:] If $i=2,\cdots,r,$ then
\[m(y')=m_1+\cdots+m_{i-1}+1,\]
and
\[M(y')=m_1+\cdots+m_i.\]

\item[case II:] If $i=1$, then $m(y')=1$ and $M(y')=m_1$.

\end{enumerate}
In any of the above cases for $p \in
(\frac{m(y')-1}{n},\frac{M(y')}{n})$ and only $p \in
(\frac{m(y')-1}{n},\frac{M(y')}{n})$

\[rq_x(p)=lq_x(p)=z_i,\]
by definition.
\end{proof}

Now we prove a lemma. It is easy to see that if $u \in pos(y')$ then

 \[(\frac{u-1}{n},\frac{u}{n}) \subset spos(y').\]
 We conclude that
 \[\cup_{u \in pos(y')}(\frac{u-1}{n},\frac{u}{n}) \subset spos(y').\]
In fact $spos(y')$ can possibly have a few points on the edge of the
intervals not in $\cup_{u \in pos(y')}(\frac{u-1}{n},\frac{u}{n})$.

\begin{lemma}
Suppose $x$ is a data vector of length $n$ and $y'$ is an element of
this vector. Also assume

\[y'\geq x_i,\;\; i \in I, \;\;\; y' \leq x_j,\;\; j \in J,\]

\[I \cap J=\phi,\;\;\;I,J \subset \{1,2,\cdots,n\}.\]
Then there exist a $p$ in $(\frac{|I|-1}{n},1-\frac{|J|}{n})$ that
belongs to $spos(y')$. In other words $lq(p)=rq(p)=y'$.
\label{sp-lemma}
 \end{lemma}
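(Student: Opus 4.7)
My plan is to appeal to the preceding lemma, which identifies $p \in spos(y')$ with $lq(p) = rq(p) = y'$. The desired conclusion is therefore equivalent to showing that the open interval $(\frac{|I|-1}{n},\, 1 - \frac{|J|}{n})$ meets $spos(y') = (\frac{m(y')-1}{n},\, \frac{M(y')}{n})$ nontrivially, i.e.,
\[
\left(\frac{|I|-1}{n},\, 1 - \frac{|J|}{n}\right) \cap \left(\frac{m(y')-1}{n},\, \frac{M(y')}{n}\right) \neq \emptyset.
\]

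To that end, I would first convert the hypotheses into positional bounds on $m(y')$ and $M(y')$ in $y = \mathrm{sort}(x)$. The $|I|$ indices of $I$ all satisfy $x_i \leq y'$, so each maps under sorting to a distinct position in $\{1,\dots,M(y')\}$, giving $|I| \leq M(y')$. Symmetrically, the $|J|$ indices of $J$ map to distinct positions in $\{m(y'),\dots,n\}$, giving $|J| \leq n - m(y') + 1$, equivalently $m(y') - 1 \leq n - |J|$.

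Two open intervals $(a,b)$ and $(c,d)$ have nonempty intersection if and only if $a<d$, $c<b$, $a<b$, and $c<d$. Three of the four required inequalities follow immediately from the bounds above together with the hypothesis $I \cap J = \emptyset$ (giving $|I|+|J| \leq n$, hence $\frac{|I|-1}{n} < 1 - \frac{|J|}{n}$) and the fact that $y'$ occurs in $x$ (giving $m(y') \leq M(y')$, hence $\frac{m(y')-1}{n} < \frac{M(y')}{n}$). The main obstacle is the remaining inequality $\frac{m(y')-1}{n} < 1 - \frac{|J|}{n}$, which my direct count yields only as $\leq$. I would upgrade it to strict by examining the tight case $m(y') - 1 = n - |J|$: then $J$ must exhaust all indices mapping to sorted positions $\{m(y'),\dots,n\}$, and a short case analysis using $I \cap J = \emptyset$ together with the existence of some index $k$ with $x_k = y'$ produces the extra slack needed to place a point of $spos(y')$ strictly inside the prescribed interval, completing the proof.
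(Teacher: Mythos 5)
Your overall strategy is the same as the paper's: translate the hypotheses into bounds on the sorted positions $m(y')$ and $M(y')$ and then show that $\left(\frac{|I|-1}{n},\,1-\frac{|J|}{n}\right)$ meets $spos(y')$. (The paper phrases this as: $pos(y')$ contains some $u_0$ with $|I|\leq u_0\leq n-|J|$, so that $\left(\frac{u_0-1}{n},\frac{u_0}{n}\right)$ lies in both intervals.) Your counting is correct, and you have in fact put your finger on precisely the step the paper asserts without justification: the hypotheses only yield $m(y')-1\leq n-|J|$, not the strict inequality $m(y')-1<n-|J|$ needed for the two open intervals to overlap.

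The gap is your final sentence: the claimed ``short case analysis'' using $I\cap J=\emptyset$ and the existence of $k$ with $x_k=y'$ cannot supply the missing slack, because in the tight case the conclusion is actually false. Take $x=(1,2)$, $y'=x_2=2$, $I=\emptyset$, $J=\{2\}$: the hypotheses hold, $spos(y')=(1/2,1)$, but the target interval is $(-1/2,1/2)$, and these are disjoint. The point is that the index $k$ with $x_k=y'$ may itself lie in $J$ (disjointness of $I$ and $J$ does not forbid this), in which case $J$ can exhaust every sorted position from $m(y')$ to $n$ and there is no extra element $\geq y'$ to force $m(y')\leq n-|J|$. To close the argument you must either add the hypothesis that some index $k$ with $x_k=y'$ lies outside $J$ (then $\{l\,:\,x_l\geq y'\}$ has at least $|J|+1$ elements, giving $m(y')\leq n-|J|$ and hence the strict inequality), or weaken the conclusion to a closed interval such as $\left[\frac{|I|}{n},\frac{n-|J|}{n}\right]$ meeting the closure of $spos(y')$, which is closer to the form actually invoked in the proof of Theorem \ref{theorem-coarsening} (at the cost of losing the clean equivalence $lq(p)=rq(p)=y'$ at the endpoints). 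Note the asymmetry: on the left end the ``$-1$'' in $\frac{|I|-1}{n}$ already provides the needed slack, so only the $J$ side is problematic.
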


\begin{proof}
From the assumption, we conclude that $pos(y')$ includes a number
between $|I|$ and $n-|J|$. Let us call it $u_0$. Hence
$(\frac{u_0-1}{n},\frac{u_0}{n}) \subset spos(y')$. Since $|I|\leq
u_0 \leq n-|J|$, we conclude that $spos(y')$ intersects with

\[\cup_{|I|\leq u \leq n-|J|} (\frac{u-1}{n},\frac{u}{n})\subset (\frac{|I|-1}{n},1-\frac{|J|}{n}).\]
\end{proof}

\section{A loss function to assess approximations of quantiles}

Our purpose is to find good approximations to the median and other
quantiles. We need a method to asses such approximations.  We
contend that such a method should not depend on the scale of the
data. In other words it should be invariant under monotonic
transformations. We define a function $\delta$ that measures a
natural ``degree of separation" between data points of
 a data vector $x$. For the sake of illustration, consider the example
$sort(x)=(1,2,3,3,4,4,4,5,6,6,7)$. Now suppose, we want to define
the degree of separation of 3,4 and 7 in this example. Since 4 comes
right after 3, we consider their degree of separation to be zero.
There are 3 elements between 4 and 7 so it is appealing to measure
their degree of separation as 3 but since the degree of separation
should be relative, we cab also divide by $n=11$, the length of the
vector, and get: $\delta(4,7)=3/11.$ We can generalize this idea to
get a definition for all pairs in $\R$. With the same example,
suppose we want to compute the degree of separation between 2.5 and
4.5 that are not members of the data vector. Then since there are 5
elements of the data vector between these two values, we define
their degree of separation as $5/11$. More formally, we give the
following definition.

\begin{definition} Suppose $x=(x_1,\cdots,x_n)$, a data vector and $z<z'$ let $\Delta_x(z,z')=\{i|z<x_i<z',i=1,\cdots,n\}$.
Then we define

\[\delta_x(z,z')=\frac{|\Delta_x(z,z')|}{n},\]
and $\delta_x(z,z)=0,$ where $|\Delta_x(z,z')|$ is the cardinality
of $\Delta_x(z,z')$. We call $\delta_x$ the ``degree of separation''
(DOS) or the ``probability loss function'' associated with $x$.
\end{definition}

We then have the following lemma about the properties of $\delta.$


\begin{lemma} The degree of separation $\delta_x$ has the following properties:

\begin{enumerate}[a)]
\item $\delta_x\geq 0.$

\item $y<y'<y'' \Rightarrow \delta_x(y,y'') \geq \delta_x(y,y').$

\item $\delta_{\phi(x)}(\phi(z),\phi(z'))=\delta_x(z,z')$ if $\phi$
is a strictly monotonic transformation.

\item $y=sort(x)$ and $y_i<y_j \Rightarrow \delta_{x}(y_i,y_j)\leq (j-i-1)/n.$
\end{enumerate}
\end{lemma}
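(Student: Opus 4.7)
The plan is to verify the four properties in turn. Parts (a)--(c) are immediate from the definition, and only (d) requires a brief index calculation.

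For (a), $\delta_x(z,z')$ is by construction a non-negative integer divided by $n$, so non-negativity is automatic. For (b), I would observe the set inclusion $\Delta_x(y,y') \subseteq \Delta_x(y,y'')$: any index $i$ with $y < x_i < y'$ automatically satisfies $y < x_i < y''$ since $y' \leq y''$. Passing to cardinalities and dividing by $n$ gives $\delta_x(y,y') \leq \delta_x(y,y'')$. For (c), take $\phi$ strictly increasing (the decreasing case is analogous). Because $\phi$ is an order-preserving bijection onto its image, the condition $z < x_i < z'$ is equivalent to $\phi(z) < \phi(x_i) < \phi(z')$, so $\Delta_{\phi(x)}(\phi(z),\phi(z'))$ and $\Delta_x(z,z')$ consist of exactly the same set of indices and hence have equal cardinality.

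For (d), I would route the argument through the minimal/maximal index notation $m_x, M_x$ introduced earlier. Since $y = \text{sort}(x)$, every coordinate $y_k$ with $k \leq M_x(y_i)$ satisfies $y_k \leq y_i$, and every $y_k$ with $k \geq m_x(y_j)$ satisfies $y_k \geq y_j$. Consequently any index in $\Delta_x(y_i,y_j)$ must lie strictly between $M_x(y_i)$ and $m_x(y_j)$, so
\[
|\Delta_x(y_i,y_j)| \leq m_x(y_j) - M_x(y_i) - 1.
\]
Because $i \in \{\ell : y_\ell = y_i\}$ we have $M_x(y_i) \geq i$, and similarly $m_x(y_j) \leq j$; substituting and dividing by $n$ yields the bound $(j-i-1)/n$.

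The only real difficulty is keeping the index bookkeeping honest in (d), which is precisely what $m_x$ and $M_x$ are designed to manage, so once these are invoked the argument essentially writes itself. A minor subtlety worth flagging is that the definition of $\delta_x(z,z')$ is stated only for $z < z'$, so in (c) \emph{strictly monotonic} is most naturally read as \emph{strictly increasing}; for strictly decreasing $\phi$ one either extends $\delta$ symmetrically to unordered pairs or applies the increasing case to $-\phi$.
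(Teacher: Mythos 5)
Your proposal is correct and follows essentially the same route as the paper: (a)--(c) by direct inspection of the definition of $\Delta_x$, and (d) by counting the sorted positions strictly between $y_i$ and $y_j$ (your detour through $m_x$ and $M_x$ is just a fleshed-out version of the paper's one-line bound $|\Delta_x(y_i,y_j)|\leq j-i-1$). Your remark on the symmetry convention needed for strictly decreasing $\phi$ in (c) is a fair point that the paper glosses over by implicitly treating $\delta_x$ as symmetric in its arguments.
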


\begin{proof}

Both a) and b) are straightforward.  To show (c), suppose $z<z'$ and
$\phi$ is strictly decreasing. (The strictly increasing case is
similar.) Then $\phi(z')<\phi(z)$ and hence
\[\Delta_{\phi(x)}(\phi(z),\phi(z'))=\{i| \phi(z')<\phi(x_i)<\phi(z)\}=\{i|z<x_i<z'\}=\Delta_x(z,z').\]
Finally d) is true because $|\Delta_x(y_i,y_j)|=|\{l|y_i<x_l<y_j,
l=1,\cdots,n\}| \leq j-i-1$.

\end{proof}

\noindent {\bf Remark.} The definition and results above can be
applied to random vectors $S=(X_1,\cdots,X_n)$ as well. In that
 $\delta_S(z,z')$ is  random.

\section*{Loss function for distributions}

We define a degree of separation for distributions which corresponds
to the notion of ``degree of separation'' defined for data vectors
to measure separation between data points.

\begin{definition} Suppose $X$ has a distribution function $ F$. Let
\[\delta_F(z',z)=\delta_F(z,z')=\lim_{u \rightarrow z^{-}}F(u)-F(z')=P(z'<X<z),\;\;z>z',\]
and $\delta_F(z,z)=0,\;z \in \R.$ We also denote this by $\delta_X$
whenever a random variable $X$ with distribution $F$ is specified.
We call $\delta_X$ the ``degree of separation'' or the ``probability
loss function'' associated with $X$.
\end{definition}

The following lemma is a straightforward consequence of the
definition.

\begin{lemma}
Suppose $x=(x_1,\cdots,x_n)$ is a data vector with the empirical
distribution $F_n$. Then
\[\delta_{F_n}(z,z')=\delta_x(z,z'),\;z,z' \in \R.\]
\end{lemma}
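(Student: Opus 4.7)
The plan is to unfold both definitions and verify they coincide by a direct computation. Without loss of generality assume $z'<z$ (the case $z=z'$ gives $0$ on both sides, and both $\delta_x$ and $\delta_{F_n}$ are symmetric in their arguments). The two quantities to compare are
\[
\delta_x(z',z)=\frac{|\{i:z'<x_i<z\}|}{n},\qquad \delta_{F_n}(z,z')=\lim_{u\to z^{-}} F_n(u)-F_n(z').
\]

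First I would write $F_n$ explicitly as $F_n(t)=\frac{1}{n}|\{i:x_i\le t\}|$. Since $F_n$ is a right-continuous step function that jumps only at the values taken by the $x_i$'s, its left limit at $z$ is obtained by replacing $\le$ with strict inequality:
\[
\lim_{u\to z^{-}} F_n(u)=\frac{|\{i:x_i<z\}|}{n}.
\]
This is the only step that requires any real argument; it follows by monotonicity of $F_n$ and the fact that the set $\{i:x_i\le u\}$ increases to $\{i:x_i<z\}$ as $u\uparrow z$.

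Substituting, I get
\[
\delta_{F_n}(z,z')=\frac{|\{i:x_i<z\}|-|\{i:x_i\le z'\}|}{n}=\frac{|\{i:z'<x_i<z\}|}{n},
\]
where the last equality uses $\{i:x_i\le z'\}\subset \{i:x_i<z\}$ (valid since $z'<z$) and the set difference is exactly $\{i:z'<x_i<z\}$. This matches $\delta_x(z',z)$ by definition, which finishes the proof.

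There is no real obstacle here: the lemma is essentially a sanity check that the distribution-level definition of the probability loss function specializes correctly to its data-vector counterpart. The only point worth flagging is the asymmetry in the definitions — $\delta_F$ uses a left limit at the larger argument, which produces the strict inequality $x_i<z$, while $F_n(z')$ uses $x_i\le z'$; it is precisely the combination of these two that recovers the open interval $(z',z)$ used in $\delta_x$.
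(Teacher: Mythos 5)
Your proof is correct. The paper offers no proof of this lemma at all---it simply declares it ``a straightforward consequence of the definition''---and your direct verification (computing the left limit $\lim_{u\to z^-}F_n(u)=\frac{1}{n}|\{i:x_i<z\}|$ and subtracting $F_n(z')=\frac{1}{n}|\{i:x_i\le z'\}|$ to recover the open-interval count) is exactly the intended argument, with the one genuinely necessary step (the left limit of a finite step function) handled properly.
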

This lemma implies that to prove a result about the degree of
separation of data vectors, it suffices to show the result for the
degree of separation of random variables.

\begin{theorem}
Let $X,Y$ be random variables and $F_X,F_Y$, their corresponding
distribution functions.\\
a) Assume $Y=\phi(X),$ for a strictly increasing or decreasing
function $\phi:\R \rightarrow \R$. Then
$\delta_{F_X}(z,z')=\delta_{F_Y}(\phi(z),\phi(z')),\;z<z' \in \R.$\\
b) $\delta_{F_X}(z,z') \leq \delta_{F_X}(z,z''),\;z\leq \ z' \leq z''.$\\
c) $\delta_{F_X}(z_1,z_3) \leq \delta_{F_X}(z_1,z_2) +
\delta_{F_X}(z_2,z_3)+P(X=z_2)$.\\
d) Suppose, $p \in [0,1]$. Then $\delta_{F_X}(lq_{F_X}(p),rq_{F_X}(p))=0.$\\
e) Suppose, $p_1<p_2 \in [0,1].$ Then
$\delta_{F_X}(lq_{F_X}(p_1),rq_{F_X}(p_2))\leq p_2-p_1.$
\label{delta-prop}
\end{theorem}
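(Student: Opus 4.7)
My plan is to work through each part by unwinding the definition $\delta_F(z,z')=P(\min(z,z')<X<\max(z,z'))$ and appealing to the Quantile Properties Lemma. Most parts follow rapidly; only (c) requires a bit of case analysis.

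For (a), since $\phi$ is strictly monotonic, it is a bijection of $\R$ onto its image, so I would translate the event $\{z'<X<z\}$ through $\phi$. If $\phi$ is increasing, the event $\{z'<X<z\}$ coincides with $\{\phi(z')<\phi(X)<\phi(z)\}$, giving $\delta_{F_X}(z,z')=P(\phi(z')<Y<\phi(z))=\delta_{F_Y}(\phi(z),\phi(z'))$. If $\phi$ is decreasing, the event becomes $\{\phi(z)<\phi(X)<\phi(z')\}$, and since $\delta$ is symmetric in its two arguments (by definition), the identity still holds. Part (b) is immediate from monotonicity of probability: under $z\leq z'\leq z''$, the open interval $(z,z')$ is a subset of $(z,z'')$, so $P(z<X<z')\leq P(z<X<z'')$.

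For (c), I would first treat the canonical ordering $z_1\leq z_2\leq z_3$. There the disjoint decomposition $\{z_1<X<z_3\}=\{z_1<X<z_2\}\sqcup\{X=z_2\}\sqcup\{z_2<X<z_3\}$ gives equality with $P(X=z_2)$ on the right, hence the stated inequality. For other orderings I would reduce to this one using symmetry of $\delta$ and part (b); for instance, if $z_2<z_1<z_3$ then $(z_1,z_3)\subset(z_2,z_3)$, so $\delta_{F_X}(z_1,z_3)\leq\delta_{F_X}(z_2,z_3)$, which is already bounded by the right-hand side. The remaining orderings are handled analogously. This enumeration is the main piece of bookkeeping, and it is the step I expect to require the most care, though no real difficulty.

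For (d), the Quantile Properties Lemma gives $lq_F(p)\leq rq_F(p)$ and $P(lq_F(p)<X<rq_F(p))=0$, which is exactly $\delta_{F_X}(lq_{F_X}(p),rq_{F_X}(p))=0$. For (e), parts (b) and (c) of the Quantile Properties Lemma yield $lq_{F_X}(p_1)\leq rq_{F_X}(p_1)\leq lq_{F_X}(p_2)\leq rq_{F_X}(p_2)$, so in particular $lq_{F_X}(p_1)<rq_{F_X}(p_2)$ is allowed (or they are equal and the claim is trivial by (d)). Writing
\[
P(lq_{F_X}(p_1)<X<rq_{F_X}(p_2))=P(X<rq_{F_X}(p_2))-P(X\leq lq_{F_X}(p_1)),
\]
I would bound the first term above by $p_2$ using Quantile Properties Lemma (f) and bound the second term below by $p_1$ using part (a) of the same lemma, giving the desired $\leq p_2-p_1$.
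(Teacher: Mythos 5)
Your proposal is correct and follows essentially the same route as the paper: unwinding the definition of $\delta$, the disjoint decomposition $\{z_1<X<z_3\}=\{z_1<X<z_2\}\cup\{X=z_2\}\cup\{z_2<X<z_3\}$ for (c) in the canonical ordering, and the identity $P(lq(p_1)<X<rq(p_2))=P(X<rq(p_2))-P(X\leq lq(p_1))$ combined with the Quantile Properties Lemma for (d) and (e). The only difference is that you spell out the non-canonical orderings in (c) and the degenerate case $lq(p_1)=rq(p_2)$ in (e), which the paper dismisses as easy.
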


\noindent {\bf Remark.} We may restate Part (c), for data vectors:
Suppose $x$ has length $n$ and $z_2$ is of multiplicity $m$, (which
can be zero). Then the inequality in (c) is equivalent to
$\delta_x(z_1,z_3) \leq \delta_x(z_1,z_2) +
\delta_x(z_2,z_3)+m/n$.\\

\begin{proof}

a) Note that for a strictly increasing function $\phi$, we have
\[P(z<X<z')=P(\phi(z)<\phi(X)<\phi(z')).\]
Now suppose $\phi$ is strictly decreasing. Then $z<z' \Rightarrow
\phi(z')<\phi(z)$. Let $Y=\phi(X)$. Then
\[\delta_X(z,z')=P(z<X<z')=P(\phi(z')<\phi(X)<\phi(z))=\delta_{Y}(\phi(z),\phi(z')).\]

b) This is trivial.

c) Consider the case $z_1<z_2<z_3$. (The other cases are easier to
show.) Then

\[\delta_{F_X}(z_1,z_3)=P(z_1<X<z_3)=P(z_1<X<z_2)+P(X=z_2)+P(z_2<X<z_3)\]

\[=\delta_{F_X}(z_1,z_2)+\delta_{F_X}(z_2,z_3)+P(X=z_2).\]

d) This result is a straightforward consequence of Lemma
\ref{quantile-properties} b) and c).

e) This result follows from
\[\delta_{F_X}(lq(p_1),rq(p_2))=P(lq(p_1)<X<rq(p_2))\]
\[=P(X<rq(p_2))-P(X \leq lq(p_1)) \leq p_2-p_1.\]
The last inequality being a result of Lemma
\ref{quantile-properties} a) and d).
\end{proof}

\noindent {\bf Remark}: (e),(b) immediately imply
\[\delta_{F_X}(lq_{F_X}(p_1),lq_{F_X}(p_2))\leq p_2-p_1,\]
and \[\delta_{F_X}(rq_{F_X}(p_1),lq_{F_X}(p_2))\leq p_2-p_1.\]

\noindent {\bf Remark.} We call part c) of the above theorem the
pseudo-triangle inequality.

\section{Data coarsening and quantile approximation algorithm}

This section introduces an algorithm to approximate quantiles in
very large data vectors. As we demonstrated in the previous section
the median of medians algorithm is not necessarily a good
approximation to the exact median of a data vector even if we have a
large number of partitions and large length of the partitions. The
algorithm is based on the idea of ``data coarsening" which we will
discuss shortly. The proposed algorithm can give us approximations
to the exact quantile of known precisions in terms of degree of
separation. After stating the algorithm, we prove some theorems that
give us the precision of the algorithm. The results hold for
partitions of non-equal length.

\begin{definition}
Suppose a data vector $x$ of length $n=n_1 n_2$ is given, $n_1,n_2>1
\in \N$. Also let $sort(x)=y=(y_1,\cdots,y_{n})$. Then the
$n_2$--coarsening of $x$, $C_{n_2}(x)$ is defined to be
$(y_{n_2},y_{2n_2},\cdots,y_{(n_1-1)n_2})$. Note that $C_{n_2}(x)$
has length $n_1-1$. Let $p_i=i/n_1,i=1,2,\cdots,(n_1-1).$ Then
$C_{n_2}(x)=(lq_x(p_1),\cdots,lq_x(p_{n_1-1}))$.
\end{definition}

We can immediately generalize the coarsening operator. Suppose
\[sort(x)=(y_1,\cdots,y_n),\] and $n_2<n$ is given. Then by The
Quotient--Remainder Theorem from elementary number theory, there
exist $n_1 \in \N \cup \{0\}$ and $r<n_2$ such that $n=n_1n_2+r$.
Define $C_{n_2}(x)=(y_{n_2},\cdots,y_{n_2(n_1-1)})$. The expression
is  similar to before. However, there are $n_2+r$ elements after
$y_{n_2(n_1-1)}$ in the sorted vector $y$. In this sense this
coarsening is not fully symmetric. We show that if $n_2$ is small
compared to $n$ this lack of symmetry has a small effect on the
approximation of quantiles.

Suppose $x$ is a data vector of length $n=\sum_{i=1}^m l_i$. We
introduce the coarsening algorithm to find approximations to the
large data vectors. \vspace{1cm}

 \noindent {\bf $d$-Coarsening quantiles
algorithm:}

\begin{enumerate}

\item Partition $x$ into vectors of length $l_1,\cdots,l_m$. (Or use
pre--existing partitions, e.g. partitions of data saved in various
files on the hard disk of a computer.)

\[x^1=(x_1,\cdots,x_{l_1}),x^2=(x_{l_1+1},\cdots,x_{l_1+l_2}),\cdots,x^m=(x_{\sum_{j=1}^{m-1}l_{j}+1},\cdots,x_n)\]

\item Sort each $x^l,\;l=1,2,\cdots,m$ and let
$y^l=sort(x^l),\;l=1,\cdots,m$:

\[y^1=(y_{1}^1,\cdots,y_{l_1}^1),\cdots,y^m=(y_{1}^m,\cdots,y_{l_m}^m).\]

\item $d$--Coarsen every vector:

\[(y_{d}^1,\cdots,y_{(c_1-1)d}^1),\cdots,(y_{d}^m,\cdots,y_{(c_m-1)d}^m),\]
and for simplicity drop $d$ and use the notation $w_{i}^j=y_{id}^j$.

\[w^1=(w_{1}^1,\cdots,w_{(c_1-1)}^1),\cdots,w^m=(w_{1}^m,\cdots,w_{(c_m-1)}^m).\]

\item Stack all the above vectors into a single vector and call it $w$. Find
$rq_w(p)$ (or $lq_w(p)$) and call it $\mu$. Then $\mu$ is our
approximation to $rq_x(p)$ (or $lq_x(p)$).
\end{enumerate}

\begin{theorem}
Suppose $x$ is of length $n=\sum_{i=1}^m l_i,\;m\geq 2$ and
$l_i=c_id$. Let $C=\sum_{i=1}^m c_i$. Apply the coarsening algorithm
to $x$ and find $\mu$ to approximate $rq_x(p)$ (or $lq_x(p)$). Then
$\mu$ is a (left and right) quantile in the interval
\[[p-\epsilon,p+\epsilon],\]
where $\epsilon=\frac{m+1}{C-m}$. In other words
$\delta_x(\mu,rq_x(p))\leq \epsilon$ and $\delta_x(\mu,lq_x(p)) \leq
\epsilon.$ When $l_i=cd,\;i=1,\cdots,m$,
$\epsilon=\frac{m+1}{m-1}\frac{1}{c-1}\leq\frac{3}{c-1}$.
\label{theorem-coarsening}
\end{theorem}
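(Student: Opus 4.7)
The plan is to reduce the bound to an application of Lemma~\ref{sp-lemma}, after first translating the quantile property of $\mu$ in the coarsened vector $w$ back into counting information on the original vector $x$. I will carry out the argument for $\mu=rq_w(p)$; the $lq_w(p)$ case is parallel because $lq_w(p)\le rq_w(p)$.

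The first step is to extract two cardinality bounds in $w$, whose length is $N=C-m$. Lemma~\ref{quantile-properties}(f) gives $|\{j:w_j<\mu\}|\le pN$, and hence $|\{j:w_j\ge\mu\}|\ge(1-p)N$, while $F_w(\mu)\ge p$ gives $|\{j:w_j\le\mu\}|\ge pN$. I would then lift each of these to $x$ partition-by-partition: since $w^i=(y_d^i,y_{2d}^i,\ldots,y_{(c_i-1)d}^i)$ is a subsequence of the sorted vector $y^i$, any $w_j^i\le\mu$ forces $y_1^i,\ldots,y_{jd}^i\le\mu$ by sortedness, so each coarsened element in the lower set drags along $d$ coordinates of $x^i$; symmetrically for the upper set, with a small edge gain coming from positions $(c_i-B_i^+)d,\ldots,c_id$ of $y^i$. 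Summing over $i$ and using the identity $dN=n-dm$,
\[
\#\{k:x_k\le\mu\}\ge p(n-dm),\qquad \#\{k:x_k\ge\mu\}\ge (1-p)(n-dm).
\]

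With these in hand, Lemma~\ref{sp-lemma} applied to a disjoint splitting (e.g.\ ties $x_k=\mu$ assigned first to $I=\{k:x_k\le\mu\}$ to use the lower-set bound, and in a second application to $J'=\{k:x_k\ge\mu\}$ to use the upper-set bound) produces a $p^*$ with $lq_x(p^*)=rq_x(p^*)=\mu$ whose distance from $p$ is at most $\max\bigl((pdm+1)/n,\;(1-p)dm/n\bigr)\le(dm+1)/n$. A short cross-multiplication using $n=dC$ and $m\ge 2$ shows $(dm+1)/n\le(m+1)/(C-m)=\epsilon$. Finally, Theorem~\ref{delta-prop}(e) together with the Remark immediately following it converts $|p-p^*|\le\epsilon$ into both $\delta_x(\mu,rq_x(p))\le\epsilon$ and $\delta_x(\mu,lq_x(p))\le\epsilon$: in the sub-case $p^*\le p$ use $\delta_x(lq_x(p^*),rq_x(p))\le p-p^*$ and $\delta_x(lq_x(p^*),lq_x(p))\le p-p^*$, while the sub-case $p^*\ge p$ uses the $rq(\cdot),lq(\cdot)$ and $lq(\cdot),lq(\cdot)$ versions of the Remark. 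The $l_i=cd$ specialisation is then the substitution $C=mc$.

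The main obstacle I expect is the partition-level bookkeeping in the second step. One has to check that the final $d$ coordinates of each $y^i$, which the coarsening throws away, do not silently ruin the bound; that ties $x_k=\mu$ are consistently assigned to exactly one of $I$ or $J$ so that Lemma~\ref{sp-lemma} produces an interval that actually sits around $p$ instead of a one-sided bound; and that the $+1$ edge contribution in the upper count is either correctly absorbed or safely discarded so that the final arithmetic closes into $(dm+1)/n\le\epsilon$. Once the counts are in place, everything else is algebra plus invocations of the already-proven quantile and $\delta$ lemmas.
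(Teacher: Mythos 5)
Your argument is correct in substance but follows a genuinely different route from the paper's. The paper never invokes the Quantile Properties Lemma at the level of $w$; instead it fixes $h$ with $\frac{h-1}{C-m}\le p<\frac{h}{C-m}$, constructs two explicit disjoint index sets $K,K'$ in the coarsened index set with $|K|=h$, $|K'|=(C-m)-h$, imposes a swap-based tie-breaking condition (if $(i,t)\in K$ then $(i,u)\notin K'$ for $u<t$), lifts them to blocks $I_t,I_t'$ of $d$ consecutive positions in each sorted partition, proves $I\cap I'=\emptyset$ from the tie-breaking, and finally compares the interval $[\frac{h}{C},\frac{m+h}{C}]$ (where $\mu$ is a quantile, via Lemma~\ref{sp-lemma}) with $[\frac{h-1}{C-m},\frac{h}{C-m})$ using the two-interval-distance lemma. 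Your version replaces all of the $K,K'$ bookkeeping by the two global counts $\#\{w_j\le\mu\}\ge p(C-m)$ and $\#\{w_j\ge\mu\}\ge(1-p)(C-m)$ from Lemma~\ref{quantile-properties}(a),(f), lifts them by the same ``each coarsened element represents $d$ originals'' block argument, and feeds the resulting cardinalities directly into Lemma~\ref{sp-lemma}; this avoids the $h$-dependent case analysis entirely and in fact yields the sharper constant $\frac{dm+1}{n}=\frac{m+1/d}{C}\le\frac{m+1}{C-m}$. The one step you must actually write out is the tie-splitting: your phrase ``in a second application'' is not yet a proof, because two separate applications of Lemma~\ref{sp-lemma} (ties all to $I$, then ties all to $J'$) produce two possibly different points of $spos_x(\mu)$, each controlled on only one side. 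Either (i) make a single application with a genuine partition of the ties, which is feasible because $\lceil dp(C-m)\rceil+\lceil d(1-p)(C-m)\rceil\le n-dm+2\le n$ when $dm\ge 2$ (guaranteed by $m\ge2$), or (ii) keep the two applications and add the short case analysis on whether $p$ lies inside, below, or above the interval $spos_x(\mu)$ — in the first case take $p^{*}=p$, in the others the relevant one-sided estimate becomes two-sided automatically. With that nailed down, your lifting counts, the inequality $\frac{dm+1}{n}\le\frac{m+1}{C-m}$, and the conversion to the two $\delta_x$ statements via Theorem~\ref{delta-prop}(e) and its Remark all check out, and the $l_i=cd$ case is indeed just substitution.
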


We need an elementary lemma in the proof of this theorem.
\begin{lemma} (Two interval distance lemma)\\
Suppose two intervals $I=[a,b]$ and $J=[c,d]$ subsets of $\R$ are
given. Then
\[\sup\{|p-q|,p \in I, q \in J\}=\max\{|a-d|,|b-c|\}.\]
\label{lemma-interval-dist}
\end{lemma}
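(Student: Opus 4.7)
The plan is to reduce the supremum over the rectangle $I\times J$ to a finite maximum over its vertices, then show that two of the four corner distances dominate the other two.

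First I would note that $f(p,q) = |p-q|$ is a convex function of $(p,q) \in \R^2$ (it is the composition of the absolute value with the linear map $(p,q)\mapsto p-q$). Since $I\times J = [a,b]\times[c,d]$ is a compact convex polytope, the maximum of the convex function $f$ over this set is attained at an extreme point, i.e., at one of the four vertices $(a,c),(a,d),(b,c),(b,d)$. Hence
\[\sup\{|p-q|:p\in I,\,q\in J\}=\max\{|a-c|,\,|a-d|,\,|b-c|,\,|b-d|\}.\]
(If preferred, this step can be done by elementary one-variable reasoning: for fixed $q$, the function $p\mapsto|p-q|$ attains its maximum on $[a,b]$ at $a$ or $b$; then maximize the resulting function of $q$ over $[c,d]$, again at an endpoint.)

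Next I would show that $|a-c|$ and $|b-d|$ are each bounded above by $\max\{|a-d|,|b-c|\}$, using the inequalities $a\le b$ and $c\le d$. For $|a-c|$: if $a\le c$, then $|a-c|=c-a\le d-a=|a-d|$ since $c\le d$; if $a>c$, then $|a-c|=a-c\le b-c=|b-c|$ since $a\le b$. A symmetric split handles $|b-d|$: if $b\le d$, then $|b-d|=d-b\le d-a=|a-d|$; if $b>d$, then $|b-d|=b-d\le b-c=|b-c|$. Therefore the four-term max collapses to $\max\{|a-d|,|b-c|\}$, and the reverse inequality is immediate since both $|a-d|$ and $|b-c|$ are attained by vertex pairs inside $I\times J$.

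I do not expect any serious obstacle; the only subtle point is ensuring the case split in the second step is exhaustive and uses only the ordering of the interval endpoints. The whole argument is short and purely algebraic, so no deep machinery is needed.
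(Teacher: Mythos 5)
Your proof is correct, but it takes a different route from the paper's. You first reduce the supremum over the rectangle $[a,b]\times[c,d]$ to a maximum over its four vertices (via convexity of $|p-q|$, or equivalently endpoint-by-endpoint maximization), and then prune the two vertices $(a,c)$ and $(b,d)$ by a four-way case split on the orderings of $a,c$ and $b,d$. The paper instead skips the vertex reduction entirely: it observes that $|p-q|$ equals either $p-q$ or $q-p$, bounds the first linear expression by $b-c$ and the second by $d-a$ uniformly over all $p\in I$, $q\in J$, and concludes $|p-q|\le\max\{b-c,d-a\}\le\max\{|b-c|,|a-d|\}$ in one line; the lower bound is, as in your argument, immediate since the relevant endpoint pairs lie in $I\times J$. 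The paper's version is shorter and avoids any case analysis, while yours is slightly more informative in that it exhibits explicitly why the two ``same-side'' corner distances $|a-c|$ and $|b-d|$ are dominated by the two cross distances. Your case split is exhaustive and each comparison correctly uses only $a\le b$ and $c\le d$, so there is no gap.
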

\begin{proof}
$\sup\{|p-q|,p \in I, q \in J\} \geq \max\{|a-d|,|b-c|\}$ is trivial
because $a,b \in I$ and $c,d \in J$.\\
To show the converse note that $|p-q|=p-q$ or $q-p,\;p \in I, q \in
J$. But
\[p-q \leq b-c,\]
and
\[q-p \leq d-a.\]
Hence
\[|p-q| \leq \max\{b-c,d-a\} \leq \max\{|b-c|,|a-d|\}.\]
This completes the proof.
\end{proof}

\begin{proof} of Theorem \ref{theorem-coarsening}.\\

Let $n'=\sum_{i=1}^m (c_i-1)=\sum_{i=1}^m c_i-m=C-m$ and
$M_C=\{(i,j)|i=1,2\cdots,m,j=1,\cdots,c_i-1\}$, the index set of
$w$. Also let $c=\max\{c_1,\cdots,c_m\}.$

Suppose, $\frac{h-1}{n'} \leq p < \frac{h}{n'},\;h=1,\cdots,n'.$
Then since $\mu=rq_w(p)$, there are disjoint subsets of $M_C$, $K$
and $K'$ such that $|K|=h$, $|K'|=n'-h$, $\mu \geq w_j^i,\;(i,j)\in
K$ and $\mu \leq w_j^i,\;(i,j)\in K'$. (This is because if we let
$v=sort(w),$  $rq_w(p)=v_h$ since $[n'p]=h-1$.)

$K,K'$ are not necessarily unique because of possible repetitions
among the $w_t^i$. Hence we impose another condition on $K$ and
$K'$. If $(i,t) \in K$ then $(i,u) \notin K',\;u<t$. It is always
possible to arrange for this condition. For suppose,  $(i,t) \in K$
and $(i,u) \in K',\;u<t.$ Then $\mu \geq w_i^t$ and $\mu \leq
w_u^i,$ hence $w_t^i \leq w_i^u$. But since $u<t$ we have $w_t^i
\leq w_i^u$ by the definition of $w^i$. We conclude that
$w_t^i=w_i^u$. Now we can simply exchange $(i,t)$ and $(i,u)$
between $K$ and $K'$. If we continue this procedure after finite
number of steps we will get $K$ and $K'$ with the desired property.

Now define

\begin{itemize}

\item \[K_1=\{(i,1)|(i,1)\in K\},\]
with $|K_1|=k_1$ and
\[I_1=\{(i,j)|j \leq d,(i,1) \in K\},\]
Then $|I_1|=k_1 d$. Also note that if $(i,j) \in I_1,$  $\mu \geq
w_1^i \geq y_j^i$.

\item Let \[K_2=\{(i,2)|,(i,2)\in K\},\]
with $|K_2|=k_2$ and

\[I_2=\{(i,j)|d < j \leq 2d,(i,2) \in K\}.\]
Then $|I_2|=k_2 d$. Also note that if $(i,j) \in I_2,$ $\mu \geq
w_2^i \geq y_j^i$.

\item Let \[K_t=\{(i,t)|(i,t)\in K\},\]
with $|K_t|=k_t$ and

\[I_t=\{(i,j)|(t-1)d < j \leq td,(i,t) \in K\}.\]
Then $|I_t|=k_t d$. Also note that if $(i,j) \in I_t,$ $\mu \geq
w_t^i \geq y_j^i$.

\item Let \[K_{c-1}=\{(i,(c-1))|(i,c-1)\in K\},\]
with $|K_{c-1}|=k_{c-1}$ and

\[I_{(c-1)}=\{(i,j)|(c-2)d < j \leq (c-1)d,(i,c-1) \in K\}.\]
Then $|I_{c-1}|=k_{c-1} d$.  Also note that if $(i,j) \in
I_{(c-1)},$  $\mu \geq w_{(c-1)}^i \geq y_j^i$.
\end{itemize}

Note that $K=\cup_{t=1}^{c-1} K_t,\;\;|K|=k_1,+\cdots+k_{c-1}$.
Since the $K_t$ are disjoint the $I_t$ are also disjoint. Let
$I=\cup_{t=1}^{c-1} I_t$ then $|I|=d(k_1+\cdots+k_{c-1})=d|K|$. Also
note that $(i,j) \in I \Rightarrow \mu \geq y_j^i$.

Similarly define,

\begin{itemize}

\item \[K'_1=\{(i,1)|(i,1)\in K'\},|K'_1|=k'_1,\]
and
\[I'_1=\{(i,j)|d < j \leq 2d,(i,1) \in K'\}.\]
Then $|I'_1|=k'_1 d$. Also note that if $(i,j) \in I'_1,$  $\mu \leq
w_1^i \leq y_j^i$.

\item Let \[K'_2=\{(i,2)|(i,2)\in K'\},|K'_2|=k'_2,\]
and
\[I'_2=\{(i,j)|2d < j \leq 3d,(i,2) \in K'\}.\]
Then $|I'_2|=k'_2 d$. Also note that if $(i,j) \in I'_2,$ $\mu \leq
w_2^i \leq y_j^i$.

\item Let \[K'_t=\{(i,t)|(i,t)\in K'\},|K'_t|=k't,\]
and
\[I'_t=\{(i,j)|td < j \leq (t+1)d,(i,t) \in K'\}.\]
Then $|I'_t|=k'_t d$. Also note that if $(i,j) \in I'_t$ then $\mu
\leq w_t^i \leq y_j^i$.

\item \[K'_{c-1}=\{(i,(c-1))|(i,c-1)\in K'\},|K'_{c-1}|=k'_{c-1},\]
and
\[I'_{c-1}=\{(i,j)| j > (c-1)d,(i,c-1) \in K'\}.\]
Then $|I'_{c-1}|=k'_{c-1} d$. Also note that if $(i,j) \in I'_{c-1}
\Rightarrow \mu \leq w_{(c-1)}^i \leq y_j^i$.
\end{itemize}
Then $|I|=|K|d$ and $|I'|=|K'|d$. We claim that $I \cap
I'=\emptyset$. To see this note that because of how the second
components in $I_t$ and $I_t'$ are defined, it is only possible that
$I_{t+1}=\{(i,j)|td < j \leq (t+1)d,(i,t+1) \in K\}$ and
$I_t'=\{(i,j)|td<j \leq (t+1)d,(i,t) \in K'\}$ intersect for some
$t=1,\cdots,c-2$. But if they intersect then there exist $i,t$ such
that $(i,t+1) \in K$ and $(i,t) \in K'$ which is against our
assumption regarding $K$ and $K'.$ Hence by Lemma \ref{sp-lemma},
$\mu$ is a quantile between

\[[\frac{|K|d}{n},\frac{n-|K'|d}{n}]=[\frac{hd}{\sum_{i=1}^m c_i d},\frac{n-(n'-h)d}{\sum_{i=1}^m c_i d}]
=[\frac{h}{C},\frac{m+h}{C}].\] But we know that
\[p \in [\frac{h-1}{C-m},\frac{h}{C-m}).\]
We are dealing with two interval in one of them $\mu$ is a quantile
and the other contains $p$.

We showed in Lemma \ref{lemma-interval-dist} if two intervals
$[a,b]$ and $[c,d]$ are given, the sup distance between two elements
of the two intervals is

\[\max \{|a-d|,|b-c|\}.\]
Applying this to the above two intervals we get,

\[\max \{|\frac{m+h}{C}-\frac{h-1}{C-m}|,
|\frac{h-1}{C-m}-\frac{h}{C}|\},\] which is equal to,
\[\max \{|\frac{mC-m^2-hm+C}{C(C-m)}|,
|\frac{C-hm}{C(C-m)}|\}.\] But $m^2+hm \leq m^2 + (C-m)m=mC$. Hence

\[|\frac{mC-m^2-hm+C}{C(C-m)}|=\frac{mC-m^2-hm+C}{C(C-m)} \leq \frac{mC+C}{C(C-m)}=\frac{m+1}{C-m}.\]
Also
\[|\frac{C-hm}{C(C-m)}| \leq \frac{C+mC}{C(C-m)}\leq \frac{m+1}{C-m}.\]
Hence the max is smaller than $\epsilon=\frac{m+1}{C-m}$ and we
conclude that $\mu$ is a quantile for $p'$ which is at most as far
as $\epsilon$ to $p$.

The case $l_i=cd$ is easily obtained by replacing $C=mc$ and noting
that $\frac{m+1}{m-1}\leq 3,\;\;m\geq 2.$
\end{proof}

In most applications, usually the data partitions are not divisible
by $d$. For example the data might be stored in files of different
length with common factors. Another situation involves a very large
file that is needed to be read in successive stages because of
memory limitations. Suppose that we need a precision $\epsilon$ (in
terms of degree of separation) and based on that we find an
appropriate $c$ and $m$. Note that $n$ might not be divisible by
$mc$.

First we prove two lemmas. These lemmas show what happens to the
quantiles if we throw away a small portion of the data vector or add
some more data to it. The first lemma is for a situation that we
have thrown away or ignored a small part of the data. The second
lemma is for a situation that a small part of the data are
contaminated or includes outliers. In both cases, we show how the
quantiles computed in the ``imperfect'' vectors correspond to the
quantiles of the original vector. In both case $x$ stands for the
imperfect vector and $w$ is the complete/clean data.

\begin{lemma} (Missing data quantile approximation lemma)\\
Suppose $x=(x_1,\cdots,x_n),$ $sort(x)=(y_1,\cdots,y_n)$ and
$y'=lq_x(p), p \in [0,1].$ Consider a vector $x^{\star}$ of length
$n^{\star}$ and let $w=stack(x,x^{\star})$. Then $y'=lq_w(p')$,
where $p' \in [p-\epsilon,p+\epsilon]$ and
$\epsilon=\frac{n^{\star}}{n+n^{\star}}.$

Similarly if $y'=rq_x(p)$ and $p \in [0,1],$ $y'=rq_w(p')$, where
$p' \in [p-\epsilon,p+\epsilon]$ and
$\epsilon=\frac{n^{\star}}{n+n^{\star}}$.
\label{lemma-missed-data-quantile}
\end{lemma}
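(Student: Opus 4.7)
The plan is to translate the statement $y'=lq_x(p)$ into explicit bounds on $p$, do the same for $y'=lq_w(p')$, and then locate $p'$ close to $p$ inside the prescribed interval. Set $a=|\{i : x_i \leq y'\}|$ and $a^-=|\{i : x_i<y'\}|$. A standard characterization (following from the structure of $spos$ and the definition of $lq$, since $y' \in \{y_1,\ldots,y_n\}$) gives $y'=lq_x(p)$ iff $p \in (a^-/n,\, a/n]$. Writing $a^{\star} = |\{j : x^{\star}_j \leq y'\}|$ and $a^{\star-} = |\{j : x^{\star}_j < y'\}|$, and setting $A=a+a^{\star}$, $A^-=a^-+a^{\star-}$, the same characterization applied to $w$ says $y'=lq_w(p')$ iff $p' \in (A^-/(n+n^{\star}),\, A/(n+n^{\star})]$. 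Using $0 \leq a^{\star-} \leq a^{\star} \leq n^{\star}$ and $0 \leq a^-,a \leq n$, a direct algebra step produces the endpoint estimates
\[
\Bigl|\tfrac{a}{n}-\tfrac{A}{n+n^{\star}}\Bigr| \leq \tfrac{n^{\star}}{n+n^{\star}}=\epsilon,\qquad \Bigl|\tfrac{a^-}{n}-\tfrac{A^-}{n+n^{\star}}\Bigr| \leq \epsilon,
\]
so the two half-open intervals have matching endpoints up to $\epsilon$.

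Given this, I would produce $p'$ by a three-way case split. If $p$ already lies in $(A^-/(n+n^{\star}),\, A/(n+n^{\star})]$, take $p'=p$. If $p > A/(n+n^{\star})$, take $p'=A/(n+n^{\star})$, so $0 < p-p' \leq a/n-A/(n+n^{\star}) \leq \epsilon$. If $p \leq A^-/(n+n^{\star})$, take $p' = \min(p+\epsilon,\, A/(n+n^{\star}))$; one must check $p' > A^-/(n+n^{\star})$, and this follows because $A^-/(n+n^{\star}) - p < A^-/(n+n^{\star}) - a^-/n \leq \epsilon$ (the first inequality being strict since $p>a^-/n$ strictly), which gives $p+\epsilon > A^-/(n+n^{\star})$. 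Since $y' \in x \subseteq w$ at least one $w_i$ equals $y'$, hence $A > A^-$ and $A/(n+n^{\star}) > A^-/(n+n^{\star})$, so $p'$ is a valid member of the open-left interval; by construction $p' \leq p+\epsilon$ and $p'\geq p \geq p-\epsilon$.

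The right-quantile half of the statement is handled identically after replacing the half-open intervals by $[a^-/n,\, a/n)$ and $[A^-/(n+n^{\star}),\, A/(n+n^{\star}))$; the same endpoint bounds and a symmetric case split (with $p'=\max(p-\epsilon,\,A^-/(n+n^{\star}))$ in the analogue of the hard case) produce a valid $p' \in [p-\epsilon,\, p+\epsilon]$. The only step that requires real care is the strictness issue in the third case above: I must use that $p > a^-/n$ is strict, not merely non-strict, to convert the non-strict endpoint bound $A^-/(n+n^{\star}) - a^-/n \leq \epsilon$ into the strict inequality $A^-/(n+n^{\star}) - p < \epsilon$ needed to slide $p'$ strictly inside the open-left interval. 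Everything else is bookkeeping on the counts $a,a^-,A,A^-$.
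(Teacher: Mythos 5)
Your proof is correct, and it reaches the stated bound by a genuinely different decomposition than the paper's. The paper's own proof is a direct rank-tracking argument: it fixes the rank $i$ of $y'$ in $sort(x)$ from the relation $\frac{i}{n}\leq p<\frac{i+1}{n}$, observes that after stacking $x^{\star}$ the value $y'$ occupies some rank $j$ with $i\leq j<i+n^{\star}$ in $sort(w)$, declares $p'=\frac{j}{n+n^{\star}}$ outright, and bounds $|\frac{j}{n+n^{\star}}-p|$ by elementary algebra from $np-1<i\leq np$. You instead characterize the full preimage $\{p: lq_x(p)=y'\}$ as the half-open interval $(a^-/n,\,a/n]$, do the same for $w$, show each endpoint moves by at most $\epsilon=\frac{n^{\star}}{n+n^{\star}}$, and then select $p'$ by a three-way case split. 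The paper's route is shorter and produces an explicit $p'$ in one line; yours costs more bookkeeping but buys precision about ties and interval boundaries: it isolates exactly where strictness matters (using $p>a^-/n$ to slide $p'$ strictly past the open left endpoint), and it is insulated from the indexing convention for $lq$ of a data vector, a point on which the paper's proof is actually shaky (its claim that $lq_x(p)=y_i$ for $\frac{i}{n}\leq p<\frac{i+1}{n}$ does not match the convention $lq_x(p)=y_{\lceil np\rceil}$ used elsewhere in the paper, though the final bound survives the off-by-one). Both arguments are fundamentally positional and yield the same $\epsilon$; your endpoint estimates $|\frac{a}{n}-\frac{A}{n+n^{\star}}|\leq\frac{n^{\star}}{n+n^{\star}}$ are exactly the paper's inequality chain in disguise.
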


\begin{proof}
We prove the result for $lq_x$ only and a similar argument works
for $rq_x$.\\
 Let $z=sort(w)$ then $lq_z=lq_w$. For $p=1$ the result is easy to see. Otherwise,  $\frac{i}{n}
\leq p < \frac{i+1}{n}$ for some $i=0,\cdots,n-1$. But then
$y'=lq_x(p)=y_i$. In the new vector $z$ since we have added
$n^{\star}$ elements $y'=z_j$ for some $j,$ $i \leq j <
i+n^{\star}$. Hence $y'=lq_z(\frac{j}{n+n^{\star}})$. From $np-1 < i
\leq np $, we conclude

\[\frac{np-1}{n+n^{\star}}<\frac{i}{n+n^{\star}} \leq \frac{j}{n+n^{\star}}<\frac{i+n^{\star}}{n+n^{\star}} \leq \frac{np+n^{\star}}{n+n^{\star}}.\]
Hence,
\[\frac{n^{\star}(1-p)-1}{n+n^{\star}}<\frac{j}{n+n^{\star}}-p<\frac{n^{\star}(1-p)}{n+n^{\star}} \Rightarrow\]
\[|\frac{j}{n+n^{\star}}-p|<\max\{|\frac{n^{\star}(1-p)-1}{n+n^{\star}}|,|\frac{n^{\star}(1-p)}{n+n^{\star}}|\}.\]
But $|\frac{n^{\star}(1-p)}{n+n^{\star}}| \leq
\frac{n^{\star}}{n+n^{\star}}$ and
$|\frac{n^{\star}(1-p)-1}{n+n^{\star}}| \leq
\max\{\frac{n^{\star}-1}{n+n^{\star}},\frac{1}{n+n^{\star}}\}$ since
$p$ ranges in $[0,1]$. We conclude that that
\[|\frac{j}{n+n^{\star}}-p|<\frac{n^{\star}}{n+n^{\star}}.\]

\end{proof}

\begin{lemma} (Contaminated data quantile approximation lemma)\\
Suppose $x=(x_1,\cdots,x_n)$, $sort(x)=(y_1,\cdots,y_n)$ and
$y'=lq_x(p), p \in [0,1].$ Consider the vector
$w=(x_1,x_2,\cdots,x_{n-n^{\star}})$ then $y'=lq_{w}(p')$, where $p'
\in [p-\epsilon,p+\epsilon]$ and
$\epsilon=\frac{n^{\star}}{n-n^{\star}}.$

Similarly if $y'=rq_x(p)$ and $p \in [0,1],$ $y'=rq_w(p')$, where
$p' \in [p-\epsilon,p+\epsilon]$ and
$\epsilon=\frac{n^{\star}}{n-n^{\star}}$.
\label{lemma-polluted-data-quantile}
\end{lemma}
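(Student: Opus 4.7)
The approach mirrors the proof of Lemma \ref{lemma-missed-data-quantile}, with two structural changes: $w$ is now smaller than $x$ rather than larger, so the denominator becomes $n-n^{\star}$ instead of $n+n^{\star}$, and deleting entries shifts the rank of a surviving value in the sorted vector downward rather than upward. I treat only the $lq$ case, the $rq$ case being symmetric.

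The first step is to pin down $y'$ as $y_i$ in $sort(x)$ for an explicit $i$ bracketed by $p$, exactly as in the previous lemma: if $p<1$, choose $i\in\{0,1,\dots,n-1\}$ with $\frac{i}{n}\leq p<\frac{i+1}{n}$, so that $y'=y_i$ and $np-1<i\leq np$, and handle $p=1$ separately. The second step is to track the rank of $y'$ in $z:=sort(w)$. Because $w$ is $x$ with $n^{\star}$ entries deleted, the count of elements of $w$ not exceeding $y'$ equals $i$ minus the number of removed entries not exceeding $y'$, a quantity in $\{0,\dots,n^{\star}\}$; hence $y'=z_j$ for some $j$ with $i-n^{\star}\leq j\leq i$, and setting $p':=\frac{j}{n-n^{\star}}$ gives $lq_w(p')=y'$. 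The final step is to substitute these sandwiches into $p'-p$, yielding
\[
\frac{n^{\star}(p-1)-1}{n-n^{\star}} \;<\; p'-p \;\leq\; \frac{p\,n^{\star}}{n-n^{\star}},
\]
and then to take the maximum of the two absolute values over $p\in[0,1]$, precisely as at the end of the proof of Lemma \ref{lemma-missed-data-quantile}, to recover $\epsilon=\frac{n^{\star}}{n-n^{\star}}$.

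The main obstacle is not conceptual — the statement is the natural ``dual'' of the missing-data lemma — but lies in two bookkeeping items. First, we need $y'$ to actually survive in $w$ (in pathological cases every occurrence of the value $y'$ could lie among the deleted entries); either an explicit line must be inserted to rule this out, or the statement must be read as referring to quantile values of $w$ that coincide with $y'$. Second, the off-by-one $-1$ term in the lower bound on $p'-p$ must be absorbed into the $n^{\star}$ term so that the clean expression $\frac{n^{\star}}{n-n^{\star}}$ emerges from the final maximum; this uses the integrality of $j$ together with $i\geq 1$ whenever $p>0$, mirroring the corresponding maneuver in the previous lemma.
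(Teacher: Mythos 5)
Your proposal follows essentially the same route as the paper's proof: bracket $p$ by $\frac{i}{n}\leq p<\frac{i+1}{n}$ so that $y'=y_i$ and $np-1<i\leq np$, track the rank of $y'$ in $sort(w)$ to get $i-n^{\star}\leq j\leq i$, set $p'=j/(n-n^{\star})$, and bound $|p'-p|$ by $\frac{n^{\star}}{n-n^{\star}}$. The two bookkeeping points you flag (that $y'$ must actually survive the deletion, and that the off-by-one term must be absorbed via integrality, which the paper handles by passing from $np-1-n^{\star}<j$ to $np-n^{\star}\leq j$) are glossed over in the paper's own argument, so your version is, if anything, slightly more careful while being the same proof.
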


\begin{proof}
We only show the case for $lq_x$ and a similar argument works for
$rq_x$.\\
Let $z=sort(w).$ Then $lq_z=lq_w.$ If $p=1$ the result is easy to
see. Otherwise, $\frac{i}{n} \leq p < \frac{i+1}{n}$ for some
$i=0,\cdots,n-1$. But then $y'=lq_x(p)=y_i$. In the new vector $z$
since we have removed $n^{\star}$ elements $y'=z_j$ for some $j,$
$i-n^{\star} \leq j \leq i$. Hence $y'=lq_z(\frac{j}{n-n^{\star}})$.
From $np-1 < i \leq np $, we conclude $np-1-n^{\star} < j \leq
np\Rightarrow np-n^{\star} \leq j \leq np.$ Hence

\[ \frac{-n^{\star}+n^{\star}p}{n-n^{\star}}\leq \frac{j}{n-n^{\star}}-p \leq \frac{n^{\star}p}{n-n^{\star}}\Rightarrow \]
\[|\frac{j}{n-n^{\star}}-p| \leq \frac{n^{\star}}{n-n^{\star}}.\]

\end{proof}

 In the case that the partitions are not divisible by $d$, we can use the same algorithm with generalized
coarsening. The error will increase obviously and the next two
lemmas say by how much.

\begin{lemma}
Suppose $x$ has length $n=lm+r$, $0 \leq r<l$ and $m=cd$. To find
$lq_x(p)$, apply the algorithm in the previous theorems to a
sub--vector of $x$ of length $lm$. Then the obtained quantile is a
quantile for a number in $[p-\epsilon,p+\epsilon]$, where
$\epsilon=\frac{m+1}{m-1}\frac{1}{c-1}+\frac{r}{lm+r}$.
\end{lemma}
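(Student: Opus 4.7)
The plan is to chain two already-proved approximations. First, apply Theorem \ref{theorem-coarsening} to the truncated sub-vector of length $lm$, all of whose partitions have size divisible by $d$; this produces a value $\mu$ which is a quantile of the sub-vector at some level close to $p$. Second, transport $\mu$ from a quantile of the sub-vector to a quantile of the full vector of length $lm+r$ using Lemma \ref{lemma-missed-data-quantile} (missing data). Since both results phrase their errors as additive shifts of the probability level, the final bound follows from the triangle inequality on $p$-values.

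More concretely, let $x'$ denote the sub-vector of length $lm$ that is fed to the algorithm, and let $x^{\star}$ denote the discarded remainder of length $r$, so that (up to a harmless reordering, which does not change any quantile) $x=\mathrm{stack}(x',x^{\star})$. Running the $d$-coarsening algorithm on $x'$, Theorem \ref{theorem-coarsening} returns a value $\mu$ which is simultaneously $lq_{x'}(p_1)$ and $rq_{x'}(p_1)$ for some $p_1$ with $|p_1-p|\le \frac{m+1}{m-1}\frac{1}{c-1}$. Then Lemma \ref{lemma-missed-data-quantile}, applied with $x'$ playing the role of the ``imperfect'' vector and $x$ the complete vector (so $n^{\star}=r$), provides a level $p_2$ with $\mu=lq_x(p_2)$ and $|p_2-p_1|\le \frac{r}{lm+r}$. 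Adding the two inequalities gives $|p_2-p|\le \frac{m+1}{m-1}\frac{1}{c-1}+\frac{r}{lm+r}=\epsilon$, as claimed.

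The main obstacle is really bookkeeping rather than any new idea: one must track that $\mu$ is not a quantile of $x'$ at the requested level $p$ but only at a nearby level, so the coarsening and missing-data errors have to be composed in $p$-space rather than in value space. It is also important that Theorem \ref{theorem-coarsening} be invoked in the form that delivers a \emph{both-sided} quantile of $x'$, as the theorem explicitly states, since Lemma \ref{lemma-missed-data-quantile} is phrased separately for $lq$ and $rq$; this lets the two results be chained without ever switching between left and right quantiles.
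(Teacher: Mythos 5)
Your proposal is correct and follows exactly the route the paper intends: the paper's own proof is the single sentence that the result is ``a straightforward consequence of Theorem \ref{theorem-coarsening} and Lemma \ref{lemma-missed-data-quantile},'' and your write-up simply fills in that chaining (coarsening error on the truncated sub-vector, then the missing-data shift of $\frac{r}{lm+r}$, composed by the triangle inequality in $p$-space). Your added care about using the both-sided quantile from the theorem so that Lemma \ref{lemma-missed-data-quantile} applies cleanly is a detail the paper leaves implicit, but it is the same argument.
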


\begin{proof}
The result is a straightforward consequence of the Theorem
\ref{theorem-coarsening} and the Lemma
\ref{lemma-missed-data-quantile}.
\end{proof}

\begin{lemma}
Suppose $x$ has length $n=\sum_{i=1}^m l_i$ and $l_i=c_id+r_i$,
$r_i<d$. Let $R=\sum_{i=1}^m r_i$. Then apply the algorithm above to
$x$ to find $lq_x(p)$, using the generalized coarsening. The
obtained quantile is a quantile for a number in
$[p-\epsilon,p+\epsilon]$ where
$\epsilon=\frac{m+1}{C-m}+\frac{R}{R+Cd}.$
\end{lemma}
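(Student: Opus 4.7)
The plan is to reduce this statement to the preceding two results: Theorem \ref{theorem-coarsening} (which handles the divisible case) and Lemma \ref{lemma-missed-data-quantile} (which handles truncation of a data vector). The strategy is parallel to the previous lemma, but with the truncation happening inside each of the $m$ partitions rather than only at the tail of $x$.

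First I would introduce an auxiliary vector $\tilde{x}$ by replacing each partition $x^i$, of length $l_i=c_i d + r_i$, with its sorted truncation $\tilde{x}^i=(y_1^i,\dots,y_{c_i d}^i)$, i.e.\ discard the top $r_i$ entries. Then $|\tilde{x}|=\sum c_i d = Cd$ and every sub-partition of $\tilde{x}$ has length divisible by $d$, so Theorem \ref{theorem-coarsening} applies to $\tilde{x}$ with partition count $m$ and total count $C$. The key observation is that the coarsened vector $w$ produced by the generalized coarsening algorithm on $x$ is identical to the coarsened vector produced by the standard coarsening algorithm on $\tilde{x}$: both collect the entries $y_d^i,y_{2d}^i,\dots,y_{(c_i-1)d}^i$ for $i=1,\dots,m$, and sorting the already-sorted $\tilde{x}^i$ does nothing. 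Hence the output $\mu$ of the algorithm on $x$ equals the output of the algorithm of Theorem \ref{theorem-coarsening} on $\tilde{x}$.

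Next I would apply Theorem \ref{theorem-coarsening} to $\tilde{x}$ to conclude $\mu=lq_{\tilde{x}}(p_1)=rq_{\tilde{x}}(p_1)$ for some $p_1$ with $|p_1-p|\leq \frac{m+1}{C-m}$. Then I would apply Lemma \ref{lemma-missed-data-quantile} to pass from $\tilde{x}$ to $x$: the full vector $x$ is obtained from $\tilde{x}$ by stacking on the $R=\sum r_i$ discarded entries, so the quantile of $\tilde{x}$ at level $p_1$ is a quantile of $x$ at some level $p_2$ with $|p_2-p_1|\leq \frac{R}{Cd+R}$. The triangle inequality yields $|p_2-p|\leq \frac{m+1}{C-m}+\frac{R}{R+Cd}=\epsilon$, which is exactly the claim.

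The only point that requires care is the first step — verifying that the generalized coarsening on $x$ and the standard coarsening on $\tilde{x}$ really do produce the same multiset $w$. Once that is noticed, everything reduces to bookkeeping: the error from coarsening and the error from truncation are supplied by the two previously proved results, and the final $\epsilon$ is obtained by simply adding them.
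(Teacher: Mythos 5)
Your proof is correct and follows essentially the same route as the paper: truncate each sorted partition to length $c_i d$, apply Theorem \ref{theorem-coarsening} to the truncated vector, then use Lemma \ref{lemma-missed-data-quantile} to account for the $R$ discarded entries and add the two error bounds. Your explicit verification that the generalized coarsening of $x$ and the standard coarsening of the truncated vector $\tilde{x}$ produce the same stacked vector $w$ is a detail the paper leaves implicit, and it is worth stating.
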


\begin{proof}
Let $l_i'=c_id$. Consider $x'$ a sub-vector of $x$ consisting of
\[(y_1^1,\cdots,y_{l_1'}^1),(y_1^2,\cdots,y_{l_2'}^2),\cdots,(y_1^m,\cdots,y_{l_m'}^m).\]
Then $x'$ has length $\sum_{i=1}^m l_i'$. By Lemma
\ref{lemma-missed-data-quantile} $p$-th quantile found by the
algorithm is a quantile in
$[p-\epsilon_1,p+\epsilon_1],\;\epsilon_1=\frac{m+1}{C-m}$ for $x'$.
$x$ has $R=\sum_{i=1}^m r_i$ elements more than $x'$. Hence the
obtained quantile is a quantile for $x$ for a number in
$[p-\epsilon,p+\epsilon]$, $\epsilon=\epsilon_1+\frac{R}{R+Cd}.$
\end{proof}

\section{Applications and computations}

Suppose a data vector $x$ has length $n$. To find the quantiles of
this vector, we only need to sort $x$. Since then for any $p \in
(0,1)$, we can find the first $h$ such that $p\geq h/n$. Note that

\[sort(x)=(lq_x(1/n),lq_x(2/n),\cdots,lq_x(1))=(rq_x(0),rq_x(1/n),\cdots,rq_x(\frac{n-1}{n})).\]
We only focus on left quantiles here. Similar arguments hold for the
right quantile.

Obviously, the longer the vector $x$, the finer the resulting
quantiles are. Now imagine that we are given a very long data vector
which cannot even be loaded on the computer memory. Firstly, sorting
this data is a challenge and secondly, reporting the whole sorted
vector is not feasible. Assume that we are given the sorted data
vector so that we do not need to sort it. What would be an
appropriate summary to report as the quantiles? As we noted also the
sorted vector itself although appropriate, maybe of such length as
to make further computation and file transfer impossible. The
natural alternative would be to coarsen the data vector and report
the resulting coarsened vector. To be more precise, suppose,
$length(x)=n=n_1n_2$ and $y=sort(x)=(y_1,\cdots,y_n)$. Then we can
report
\[y'=C_{n_2}(y)=(y_{n_2},\cdots,y_{(n_1-1)n_2}).\]
This corresponds to

\[(lq_{y'}(1/n_2),\cdots,lq_{y'}(1)).\]
How much will be lost by this coarsening? Suppose, we require the
left quantile corresponding to $(h-1)/n<p \leq h/n,\;h=1,\cdots,n$.
Then $x$ would give us $y_{h}$. But since $(h-1)/n<p \leq h/n$

\[np< h \leq np+1.\]
Also suppose for some $h'=1,\cdots,n_1,$ \[(h'-1)/(n_1-1)<p \leq
(h')/(n_1-1)\Rightarrow (h'-1)<p(n_1-1) \leq h'\]
\[\Rightarrow (n_1-1)p \leq h'< p(n_1-1)+1. \] Then
\[(h-1)(n_1-1)/n <h'< h(n_1-1)/n+1,\]
and
\begin{equation}
(h-1)(n_1-1)n_2/n <h'n_2< h(n_1-1)n_2/n+n_2.
\label{eq-quantile-coarsen}
\end{equation}
Using the coarsened vector, we would report $y_{h'(n_2)}$ as the
approximated quantile for $p$. The degree of separation between this
element and the exact quantile using Equation
\ref{eq-quantile-coarsen} is less than or equal to

\[\max \{\frac{|h-(h-1)(n_1-1)n_2/n|}{n},\frac{|h(n_1-1)n_2/n+n_2-h|}{n}\}.\]
This equals
\[\max
\{|\frac{-hn_2-n_1n_2+n_2}{n^2}|,|\frac{-hn_2+nn_2}{n^2}|\}.\] But

\[|\frac{-hn_2-n_1n_2+n_2}{n^2}|=\frac{n_2(n_1+n-1)}{n^2}<\frac{n_2(n_1+n)}{n^2}=\frac{1}{n}+\frac{n_2}{n},\]
and
\[|\frac{-hn_2+nn_2}{n^2}|<\frac{n_2}{n}.\]
Hence the degree of separation is less than $1/n+1/n_1$. We have
proved the following lemma.

\begin{lemma}
Suppose $x$ is a data vector of the length $n=n_1n_2$ and
$y=sort(x)$, $y'=C_{n_2}(y)$. Then if we use the quantiles of $y'$
in place of $x$, the accuracy lost in terms of the probability loss
of $x$ ($\delta_x$) is less than $1/n+1/n_1$.
\label{lemma-quantile-coarse}
\end{lemma}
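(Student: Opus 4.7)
The plan is to track, for an arbitrary $p\in(0,1]$, the two indices involved: the index $h$ of the true left quantile $y_h=lq_x(p)$ in the full sorted vector, and the index $h'$ for which $y'_{h'}=y_{h'n_2}$ is the left quantile of the coarsened vector $y'$ at level $p$. Then I would bound the degree of separation $\delta_x(y_h,y_{h'n_2})$ by controlling the position gap $|h-h'n_2|$ and dividing by $n$, using part (d) of the degree-of-separation lemma which says $\delta_x(y_i,y_j)\leq (j-i-1)/n$ when $y_i<y_j$.

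First I would write down the defining inequalities. Since $y=sort(x)$ has length $n$, the condition $(h-1)/n<p\leq h/n$ picks out $h$ uniquely and gives $np\leq h<np+1$. Since $y'$ has length $n_1-1$, the analogous condition $(h'-1)/(n_1-1)<p\leq h'/(n_1-1)$ gives $(n_1-1)p\leq h'<(n_1-1)p+1$. Multiplying the second pair by $n_2$ and comparing with the first, I would derive the enclosure
\[
(h-1)(n_1-1)n_2/n \;<\; h'n_2 \;<\; h(n_1-1)n_2/n+n_2,
\]
which is exactly the relation (\ref{eq-quantile-coarsen}) already flagged in the preceding discussion.

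Next I would apply the two-interval distance lemma (Lemma \ref{lemma-interval-dist}) to the intervals $\{h\}$ and $((h-1)(n_1-1)n_2/n,\;h(n_1-1)n_2/n+n_2)$, so that the worst-case distance between $h$ and $h'n_2$ is the maximum of the two endpoint distances
\[
\max\bigl\{\,|h-(h-1)(n_1-1)n_2/n|,\;|h(n_1-1)n_2/n+n_2-h|\,\bigr\}.
\]
Using $n=n_1n_2$ to simplify each term, the first collapses to $n_2(n_1+n-1)/n^2<1/n+n_2/n$ and the second to at most $n_2/n$. Dividing by $n$ to convert the positional gap into a degree of separation via $\delta_x(y_i,y_j)\leq(j-i-1)/n\leq |j-i|/n$, I obtain the desired bound $\delta_x<1/n+n_2/n=1/n+1/n_1$.

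The only real obstacle is bookkeeping: I have to handle the boundary case $p=h/n$ (and $p=1$) carefully so that the strict versus non-strict inequalities line up, and I must verify that $h'n_2$ is in fact a legal index in $y$ (i.e., $1\leq h'\leq n_1-1$). Both are routine from the ranges above, so the argument reduces essentially to the algebra already sketched in the paragraph preceding the lemma.
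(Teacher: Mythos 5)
Your proposal is correct and follows essentially the same route as the paper: the paper's own proof is exactly the index-tracking computation in the paragraph preceding the lemma, deriving the enclosure (\ref{eq-quantile-coarsen}) for $h'n_2$ and bounding the two endpoint gaps by $1/n+n_2/n$ and $n_2/n$ respectively. The only difference is that you make explicit the appeals to Lemma \ref{lemma-interval-dist} and to the property $\delta_x(y_i,y_j)\leq (j-i-1)/n$, which the paper uses implicitly.
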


The algorithm proposes that instead of sorting the whole vector and
then coarsening it, coarsen partitions of the data. The accuracy of
the quantiles obtained in this way is given in the theorems of the
previous section. This allows us to load the data into the memory in
stages and avoid program failure due to the length of the data
vector. We are also interested in the performance of the method in
terms of speed, and do a simulation study using the ``R'' package (a
well--known software for statistical analysis) to assess this. In
order to see theoretical results regarding the complexity of the
special case of the algorithm for equal partitions see
\cite{alsabti-1997}. For the simulation study, we create a vector,
$x$, of length $n=10^7$. We apply the algorithm for
$m=1000,c=20,d=500$. We create this vector in a loop of length 1000.
During each iteration of the loop, we generate a random mean for a
normal distribution by first sampling from $N(0,100)$. Then we
sample 10,000 points from a normal distribution with this mean and
standard deviation 1. We compare two scenarios:

\begin{enumerate}
\item Start by a NULL vector $x$ and in each iteration add the full generated vector of length 10000 to $x$. After the
loop has completed its run, sort the data vector which now has
length $10^7$ by the command sort in R and use this to  find the
quantiles.

\item Start with a NULL vector $w$. During each iteration after
generating the random vector, $d$-coarsen the data by $d=500$.
(Hence $m=1000$, $c=20$.) In order to do that computing, first apply
the sort command to the data and then simply $d$-coarsen the
resulting sorted vector. During each iteration, add the coarsened
vector to $w$. After all the iterations, sort $w$ and use it to
approximate quantiles.
\end{enumerate}

\noindent {\bf Remark.} The first part corresponds to the
straightforward quantiles' calculation and the second corresponds to
our algorithm. Note that in the real examples instead of the loop,
we could have a list of 1000 data files and still this example
serves as a way of comparing the straightforward method and our
algorithm.

\noindent {\bf Remark.} Note that if we wanted to create an even
longer vector say of length $10^{10}$ then the first method would
not even complete because the computer would run out of memory in
saving the whole vector $x$.

\noindent {\bf Remark.} The final stage of the algorithm can use the
fact that $w$ is built of ordered vectors to make the algorithm even
faster. We will leave that a problem to be investigated in the
future.

We have repeated the same procedure for $n=2 \times
10^7,m=1000,d=500$ and $n=10^8,m=1000,d=500$. The results of the
simulation are given in Table \ref{table-quantile-approx-sim}, in
which ``DOS'' stands for the degree of separation between the exact
median and the approximated median. The ``DOS bound" bounds the
degree of separation obtained by the theorems in the previous
section. For $n=10^7,n=2 \times 10^7$ significant time accrue by
using the algorithm. For a vector of length $10^8$, R crashed when
we tried to sort the original vector and only the algorithm could
provide results. For all cases the exact and approximated quantiles
are close. In fact the dos is significantly smaller than the dos
bound. This is because this is a ``worst-case" bound. The exact and
approximated quantiles for $n=10^7$ are plotted in Figure
\ref{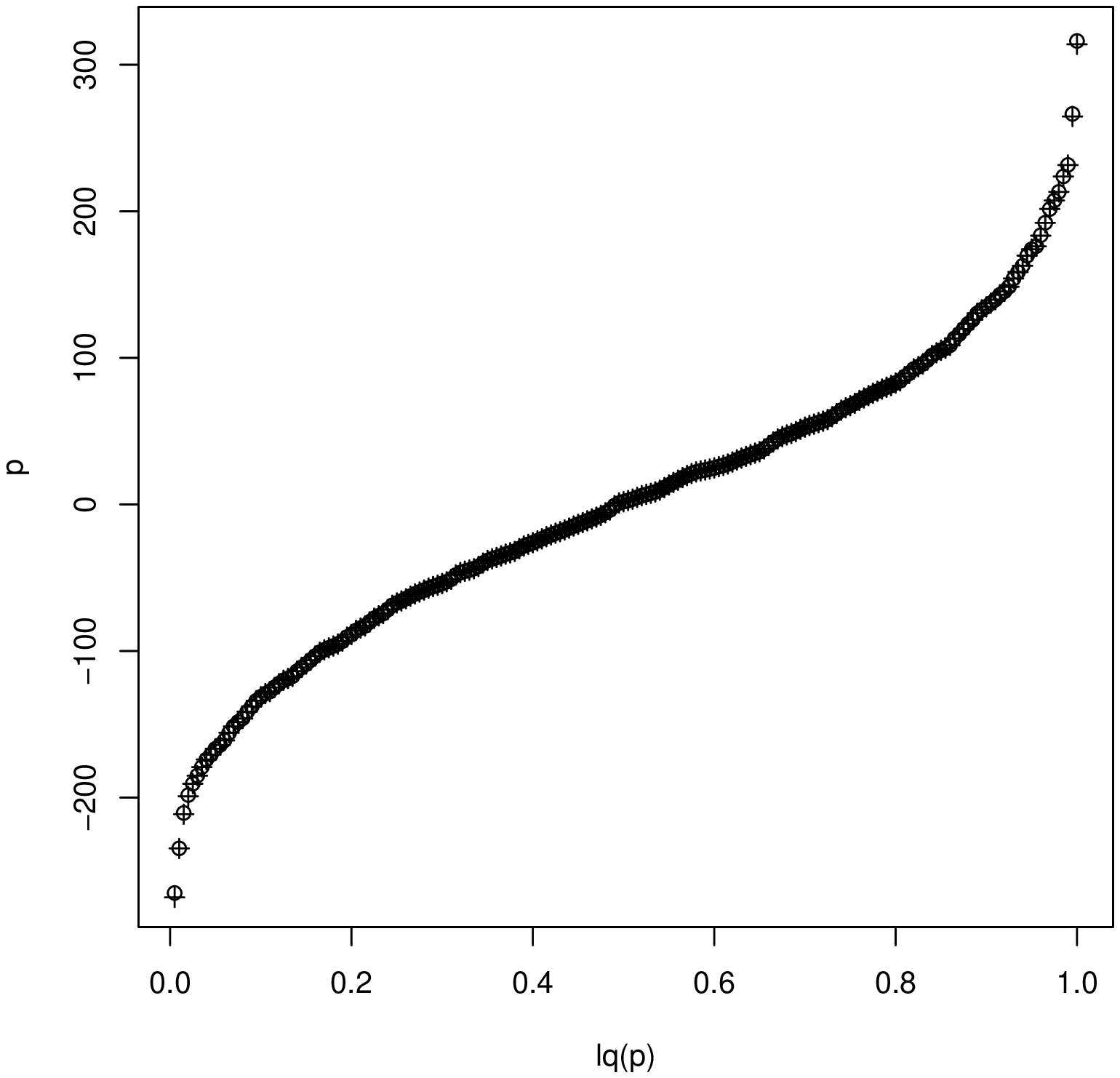}.

{\tiny
\begin{table}[H]
\centering  \footnotesize
 \begin{tabular}{lcccc}
\toprule[1pt] Length & $n=10^7$ & $n=2 \times 10^7$ & $n=10^8$ \\
Exact median value & 1.847120 &  1.857168 & NA \\

 Algorithm median value &  1.866882 & 1.846463 & 1.846027 \\

 DOS  & 0.00012 & $-6.475 \times 10^{-5}$  & NA \\

  DOS bound & 0.05268421 & 0.02566667 & 0.005030151  \\
  Time for exact median & 186 sec & 461 s  & NA \\

 Time for the algorithm & 6 sec & 18 s & 98 s \\
\bottomrule[1pt]
  \end{tabular}
  \caption{Comparing the exact method with the proposed algorithm in R run on a laptop
  with 512 MB memory and a processor 1500 MHZ, $m=1000,d=500$. ``DOS'' stands for degree of separation in the original vector.
  ``DOS bound'' is the
theoretical degree of separation obtained by Theorem
\ref{theorem-coarsening}.} \label{table-quantile-approx-sim}
 \end{table}}

\begin{figure}
\centering
\includegraphics[width=0.55\textwidth] {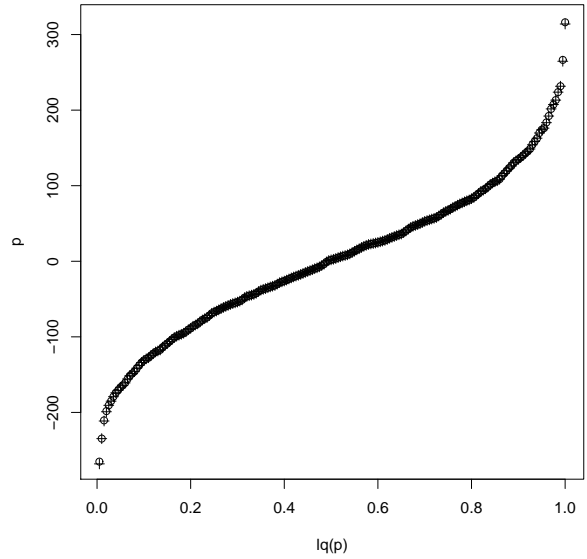}
 \caption{Comparing the approximated quantiles to the exact quantiles $N=10^7$. The circles are the exact quantiles and the
$+$ are the corresponding approximated quantiles.}
 \label{quantiles-sim-compare.ps}
\end{figure}

\begin{figure}
\centering
\includegraphics[width=0.55\textwidth] {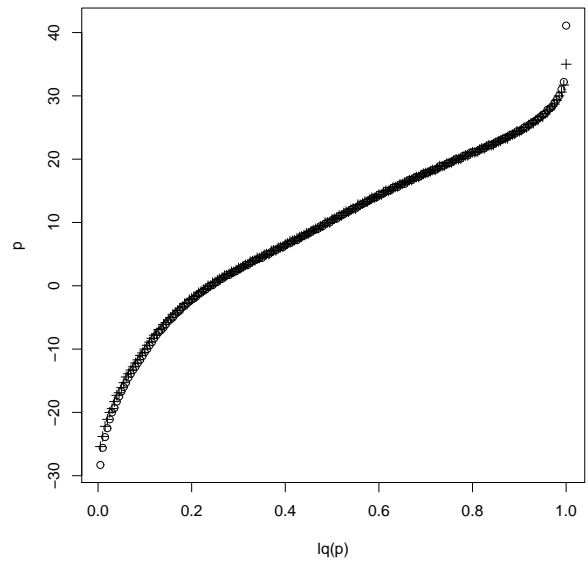}
 \caption{Comparing the approximated quantiles to the exact quantiles for $MT$ (daily maximum temperature) over 25 stations in Alberta 1940--2004.
The circles are the exact quantiles and the $+$ the approximated
quantiles.}
 \label{quantiles-compare-MT.ps}
\end{figure}

Next, we apply the algorithm on a real dataset. The dataset includes
the daily maximum temperature for 25 stations over Alberta during
the period 1940--2004. We focus on the 95th percentile. The results
are given in Table \ref{table-quantile-approx-MT}. The algorithm
finds the percentile more quickly but the time difference is not as
large as the simulation. This is because most of the time of the
algorithm and the exact computation is spent on reading the files
from the hard drive. The dos bound is about 0.01 (on the 0--1
probability scale). The true degree of separation is about 0.001.
The estimated quantiles and the exact quantiles are plotted in
Figure \ref{quantiles-compare-MT.ps}. Notice that the exact and
approximated values match except at the very beginning (very close
to zero) and end (when it is close to 1), where we see that the
circles (corresponding to exact quantiles) and the $+$s
(corresponding to the approximated quantiles) do not completely
match. This difference is at most 0.01 in terms of dos in any case.

{\tiny
\begin{table}[H]
\centering  \footnotesize
 \begin{tabular}{lc}
\toprule[1pt]

Exact 95th percentile &  27 C\\

 Algorithm 95th percentile &  26.7 C\\

 DOS  & 0.001278726 \\

  DOS bound & 0.01052189\\
  time for exact median & 8 min 6 sec\\

 time for the algorithm & 7 min 29 sec\\

\bottomrule[1pt]
  \end{tabular}
  \caption{Comparing the exact method with the proposed algorithm in R (run on a laptop
  with 512 MB memory and processor 1500 MHZ) to compute the quantiles of $MT$ (daily maximum temperature) over 25 stations
with data from 1940 to 2004.}
 \label{table-quantile-approx-MT}
 \end{table}}

\noindent{\bf Acknowledgements}: I would like to thank Jim Zidek and
Nhu Le for insightful comments and Jim Zidek for bringing up the
motivating question.

\bibliographystyle{plainnat}
\bibliography{mybibreza}

\end{document}